\definecolor{red}{rgb}{1,0.2,0.2}
\definecolor{green}{rgb}{0.2,1,0.5}
\definecolor{blue}{rgb}{0,0,1}
\definecolor{lightblue}{rgb}{0.3,0.5,1}
\newcommand{\diag}{\mathtt{diag}}
\newcommand{\0}{\mathbf{0}}
\newcommand{\1}{\mathbf{1}}
\newcommand{\cb}{\mathbf{c}}
\newcommand{\bsigma}{\boldsymbol{\sigma}}
\newcommand{\x}{{\mathbf{x}}}
\newcommand{\RN}{\mathbb{R}^N}
\newcommand{\Rnn}{\mathbb{R}^{n\times n}}
\newcommand{\gb}{\mathbf{g}} 
\newcommand{\pb}{\mathbf{p}}
\newcommand{\qb}{\mathbf{q}}
\newcommand{\bs}{\mathbf{s}}
\newcommand{\T}{^{\mathsf{T}}}
\newcommand{\tb}{\mathbf{t}}
\newcommand{\xx}{{\mathbf{x}}}
\newcommand{\yb}{\mathbf{y}}
\newcommand{\nnb}{\nonumber}
\newtheorem{thm}{Theorem}%
\newtheorem{lem}{Lemma}
\newtheorem{prop}{Proposition}
\newtheorem{cor}{Corollary}
\theoremstyle{remark}
\newcommand {\cA}{{\mathcal{A}}}
\newcommand {\cG}{{\mathcal{G}}}
\newcommand {\cS}{{\mathcal{S}}}
\newcommand {\cV}{{\mathcal{V}}}
\newcommand {\bd} {{\bf d}}
\newcommand {\bh} {{\bf h}}
\newcommand {\bp} {{\bf p}}
\newcommand {\bq} {{\bf q}}
\newcommand {\bx} {{\bf x}}
\newcommand {\by} {{\bf y}}
\newcommand {\blambda} {\boldsymbol{\lambda}}
\newcommand {\bdelta} {\boldsymbol{\delta}}
\newcommand {\bvarepsilon} {\boldsymbol{\varepsilon}}
\newcommand {\bmu} {\boldsymbol{\mu}}
\newcommand {\N} {{\mathbb{N}}}
\newcommand {\R} {\mathbb{R}}
\newcommand {\C} {{\rm I\kern-5pt C}}
\newcommand{\beqa}{\begin{eqnarray}}
\newcommand{\eeqa}{\end{eqnarray}}
\newcommand{\beqan}{\begin{eqnarray*}}
	\newcommand{\eeqan}{\end{eqnarray*}}
\newcommand{\beq}{\begin{equation}}
\newcommand{\eeq}{\end{equation}}
\newcommand{\bfl}{\begin{flushleft}}
	\newcommand{\efl}{\end{flushleft}}
\newcommand{\myb}{\hspace{-0.1in}}
\newcommand{\myeq}{& \hspace{-0.1in} = & \hspace{-0.1in}}
\newcommand{\lb}{\nonumber \\}
\newcommand{\myarr}{\begin{array}{lll}}
	\newcommand{\sC}{^{(C)}}
	\newcommand{\mygeq}{& \myb \geq & \myb}
	\newcommand{\bitem}{\begin{itemize}}
		\newcommand{\eitem}{\end{itemize}}
	\newcommand{\benum}{\begin{enumerate}}
		\newcommand{\eenum}{\end{enumerate}}
	\newcommand{\myhb}{\hspace{-0.3in}}
	\newcommand{\myskip}{\\ \vspace{-0.1in}}
	\newcommand{\corres}{%
  \mathrel{%
    \vcenter{\offinterlineskip
      \ialign{##\cr$\to$\cr\noalign{\kern-1.5pt}$\to$\cr}%
    }%
  }%
}
\begin{document}
		
\title{Optimal Cybersecurity Investments Using SIS Model: Weakly Connected Networks}
		
\author{\IEEEauthorblockN{Van Sy Mai}
\IEEEauthorblockA{\textit{CTL, NIST} \\
Gaithersburg, MD USA \\
vansy.mai@nist.gov}
\and
\IEEEauthorblockN{Richard J. La}
\IEEEauthorblockA{\textit{ACMD, NIST 
\& Univ. of MD} \\
College Park, MD USA \\
hyongla@umd.edu}
\and
\IEEEauthorblockN{Abdella Battou}
\IEEEauthorblockA{\textit{CTL, NIST} \\
Gaithersburg, MD USA \\
abdella.battou@nist.gov}
}

\maketitle
		
\begin{abstract}
We study the problem of minimizing the (time) 
average security costs in large systems 
comprising many interdependent subsystems, 
where the state evolution is captured by a susceptible-infected-susceptible (SIS) model. 
The security costs reflect 
security investments, economic losses
and recovery costs from infections and failures 
following successful attacks. However, unlike in 
existing studies, we assume that the underlying
dependence graph is only weakly connected, but
not strongly connected. When the dependence graph
is not strongly connected, existing approaches to 
computing optimal security investments cannot be
applied. Instead, we show that it is still possible
to find a good solution by {\em perturbing} the
problem and establishing necessary continuity results
that then allow us to leverage the existing algorithms.
\end{abstract}

\section{Introduction}

In complex engineering systems, comprising systems work together to deliver their services, e.g., information and communication networks and power systems, introducing interdependence among the systems. This interdependence among systems also allows a local failure and infection of a system by malware to spread to other systems. Similarly, infectious diseases propagate via contacts in social networks. Therefore, the structure of interdependence among the systems should be taken into consideration when determining their security investments.

There is already a large volume of literature that examines how to optimize the (security) investments in complex systems or the mitigation of disease spread. For example, in \cite{Khalili-NetCon19, La-ToN16, LelargeBolot-NetEcon08, Hota-TCNS18}, researchers adopted a game theoretic formulation to study the problem of security investments with distributed agents. In another line of research, which is more closely related to our study, researchers investigated optimal strategies using vaccines/immunization (prevention) \cite{Cohen-PRL03,Preciado-CDC13}, antidotes or curing rates (recovery) \cite{Borgs-RSA10,  mai2018distributed, Ottaviano-JCN18} or a combination of both preventive and recovery measures \cite{Nowzari-TCNS17, Preciado-TCNS14}. In addition, in recent studies~\cite{mai_globecom20, mai_ToN21}, Mai et al. investigated the problem of minimizing the (time) average costs of a system operator, where the costs includes both security investments and recovery/repair costs ensuing infections or failures. 

All of theses studies assume that the underlying dependence graph is strongly connected. In this study, we adopt the framework used in \cite{mai_globecom20, mai_ToN21} but allow the dependence graph to be {\em weakly} connected. When the graph is only weakly connected, some of key properties and results proved for strongly connected networks do not hold. As a result, we cannot directly apply the algorithms from existing studies, including those of \cite{mai_ToN21}. 

In a related study, Khanafer et al. \cite{Khanafer-CDC14} extended the earlier studies on the stability of the susceptible-infectious-susceptible (SIS) model (e.g., \cite{Preciado-CDC13}) to weakly connected networks. A weakly connected network comprises a set of strongly connected components (SCCs) $\{ S_1, S_2, \ldots, S_n \}$. They assumed that the $n$ SCCs can be ordered $S_1 \prec S_2 \prec \cdots \prec S_n$, where $S_i \prec S_{i+1}$ indicates the presence of a directed path from a node in $S_i$ to another node in $S_{i+1}$ but not vice versa, and proved the following: (i) if every SCC has a reproduction number less than 1, then the disease-free state is the unique globally asymptotically stable (GAS) equilibrium; and (ii) if $S_1$ has a reproduction number larger than 1 and every other SCC has a reproduction number smaller than 1, then there is a unique endemic GAS equilibrium. 

In our model, attacks targeting systems arrive according to some (stochastic) process. Successful infections of systems can spread to other systems via dependence among the systems. The system operator decides suitable security investments to fend off the attacks, which in turn determine the {\em breach probability} that they fall victim to attacks and become infected. Our goal is to minimize the (time) average costs of the system operator managing a large system comprising many systems, e.g., large enterprise intranets. The overall costs in our model account for both security investments and recovery/repair costs ensuing infections, which we call {\em infection costs}.
\myskip

{\bf {\em Contributions:}} This paper presents an important extension of the work reported in \cite{mai_ToN21} to more general and common situations in practice, where the underlying dependence graph is only weakly connected. Our approach based on perturbation of the external attack rates allows us to leverage efficient methods for solving the nonconvex perturbed problem approximately. In particular, we show that the optimal point and optimal value of the problem are continuous and increasing in the perturbation vector. As a result, we can solve the perturbed problem instead, for which suboptimality can be quantified using computable upper and lower bounds on the optimal value. We also provide a sufficient condition under which these bounds coincide, i.e., the perturbed problem can be solved exactly despite its nonconvexity. 
\myskip

The rest of the paper is organized as follows:
Section~\ref{sec:Prelim} explains the notation 
and terminology we adopt. Section~\ref{sec:Formulation}
describes the setup and the problem formulation. 
Section~\ref{sec:Perturbed} presents the perturbed problem and the main results, followed by our solution approach in Section~\ref{sec:Proposed}. 
Section~\ref{sec:Numerical} provides some numerical results and 
Section~\ref{sec:Conclusion} concludes the paper.

\section{Preliminaries}
		\label{sec:Prelim}
		
		\subsection{Notation and Terminology}
		\label{subsec:Notation}
		
		Let $\mathbb{R}$ and $\mathbb{R}_+$ denote the 
		set of real numbers and nonnegative real 
		numbers, respectively. 
		%
		For a matrix $A=[a_{i,j}]$, let $a_{i,j}$ denote 
		its $(i,j)$ element, $A\T$ its transpose, 
		and $\rho(A)$ its spectral radius. 
		For two matrices $A$ and $B$, we write $A \!\ge\! B$ if 
		$A\!-\!B$ is a nonnegative matrix. 
		%
		We use boldface letters and numbers to denote vectors, e.g., 
		$\x \!=\! [x_1,..., x_n]\T$ and 
		$\1 \!=\! [1,...,1]\T$. 
		For any two vectors $\xx$ and $\yb$ of the same 
		dimension, $\xx \circ \yb$ and $\frac{\xx}{\yb}$ are 
		their element-wise product and division, respectively. 
		For $\xx \in \mathbb{R}^n$, $\diag(\xx) \in  
		\mathbb{R}^{n \times n}$ denotes the diagonal matrix 
		with diagonal elements $x_1,\ldots,x_n$.
		
		A directed graph $\mathcal{G} \!=\! (\mathcal{V}, 
		\mathcal{E})$ consists of a set of nodes $\mathcal{V}$ 
		and a set of directed edges $\mathcal{E}  \subseteq  
		\mathcal{V} \times \mathcal{V}$. A directed path is a 
		sequence of edges in the form $\big( (i_1, i_2), 
		(i_2, i_3),..., (i_{k-1}, i_k) \big)$. The graph 
		$\mathcal{G}$ is strongly connected if there is a 
		directed path from each node to any other node. The directed graph $\mathcal{G}$ is said to be weakly connected if the undirected graph we obtain after replacing its directed edges with undirected edges is connected.
		\\ \vspace{-0.1in}

		\subsection{M-Matrix Theory} 
		\label{subsec:M-matrix}
		
		A matrix $A \in \R^{n \times n}$ is a Z-matrix if the off-diagonal elements of $A$ are 
		nonpositive and the diagonal elements are nonnegative. 
		A matrix $A \in \R^{n \times n}$ is an M-matrix if it is a Z-matrix and can be expressed in the form $A = 
		sI - B$, where $B \in \Rnn_+$ and 
		$s \ge \rho(B)$. 
		%
		We shall make use of the following results on the 
		properties of a nonsingular M-matrix  \cite{plemmons1977m}. 
		
		\begin{lem}\label{lem_M_Matrix}
			Let $A\in\mathbb{R}^{n\times n}$ be a Z-matrix. Then, 
			$A$ is a nonsingular M-matrix 
			if and only if one of the following conditions holds: 
			
			\begin{itemize}
				
				
				\item[(a)] $A$ is inverse-positive, i.e., $\exists A^{-1} 
				\in \R^{n \times n}_+$. 
				
				
				
				
				
				
				\item[(b)] $\exists \xx>\0$ with $A\xx > \0$. 
				
			\end{itemize}
		\end{lem}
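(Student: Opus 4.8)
The plan is to prove the three assertions ``$A$ is a nonsingular M-matrix'', (a), and (b) mutually equivalent by establishing the cycle ``M-matrix'' $\Rightarrow$ (a) $\Rightarrow$ (b) $\Rightarrow$ ``M-matrix''. For the first implication, assume $A$ is a nonsingular M-matrix and write $A = sI - B$ with $B\in\Rnn_+$ and $s>\rho(B)$, the strict inequality being forced by nonsingularity. Factoring $A = s(I - s^{-1}B)$ with $\rho(s^{-1}B)<1$, the Neumann series $\sum_{k\ge0}(s^{-1}B)^k$ converges, and since it is a sum of nonnegative matrices, $A^{-1} = s^{-1}\sum_{k\ge0}(s^{-1}B)^k$ exists and lies in $\Rnn_+$, which is (a).

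For (a) $\Rightarrow$ (b), I would take any strictly positive $\yb$ (say $\yb=\1$) and set $\xx = A^{-1}\yb$, so that $A\xx = \yb > \0$. Since $A^{-1}$ is nonsingular, it has no all-zero row (otherwise the corresponding row of $A^{-1}A=I$ would vanish); hence each row of $A^{-1}$ is a nonnegative, nonzero vector, and its inner product with the strictly positive $\yb$ is positive, giving $\xx>\0$ as required.

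The substantive step is (b) $\Rightarrow$ ``M-matrix''. Given the Z-matrix $A$ and some $\xx>\0$ with $A\xx>\0$, set $s=\max_i a_{ii}$ and $B = sI - A$; the Z-matrix property makes $B\in\Rnn_+$ (off-diagonal entries $-a_{ij}\ge0$ and diagonal entries $s-a_{ii}\ge0$), and $A = sI - B$ is the candidate M-matrix representation, so it only remains to show $s>\rho(B)$. I would rewrite $A\xx>\0$ as the componentwise inequality $B\xx < s\xx$ and then invoke a Perron--Frobenius left eigenvector of the nonnegative matrix $B$: choose $\vv\ge\0$, $\vv\neq\0$, with $\vv\T B = \rho(B)\vv\T$. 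Left-multiplying $B\xx < s\xx$ by $\vv\T$ and using $\vv\T\xx>0$ (which holds since $\vv$ is nonnegative and nonzero while $\xx>\0$) yields $\rho(B)\,\vv\T\xx = \vv\T B\xx < s\,\vv\T\xx$, hence $\rho(B)<s$, closing the cycle. An equivalent route avoiding eigenvectors is to use the weighted max-norm $\|\z\|_{\xx}=\max_i |z_i|/x_i$, whose induced operator norm of $B$ equals $\max_i (B\xx)_i/x_i < s$, so that $\rho(B)\le\|B\|_{\xx}<s$.

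I expect the Perron--Frobenius (or weighted-norm) argument in the last implication to be the only real obstacle; the key subtlety is bookkeeping on the \emph{strict} inequalities, since it is precisely the strictness of $A\xx>\0$ that upgrades $\rho(B)\le s$ to $\rho(B)<s$, i.e. that delivers \emph{nonsingularity} rather than merely the M-matrix property. I would also note in passing that $s>0$ is automatic here (if $s=0$ then $A\xx>\0$ would read $-B\xx>\0$ with $B\ge0$ and $\xx>\0$, which is impossible), so all divisions by $s$ above are legitimate. \QED
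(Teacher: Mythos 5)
Your proposal is correct. Note that the paper does not prove this lemma at all: it is quoted as a known characterization of nonsingular M-matrices with a citation to Plemmons, so there is no in-paper argument to compare against. Your cyclic argument (``nonsingular M-matrix'' $\Rightarrow$ (a) $\Rightarrow$ (b) $\Rightarrow$ ``nonsingular M-matrix'') is the standard textbook derivation and all three links are sound: the Neumann-series step correctly uses that nonsingularity upgrades $s \ge \rho(B)$ to $s > \rho(B)$ (since $\rho(B)$ is itself an eigenvalue of the nonnegative $B$, equality would make $sI-B$ singular); the no-zero-row observation correctly yields $\xx = A^{-1}\1 > \0$; and in the last implication the left Perron vector $\vv \ge \0$, $\vv \neq \0$ of the (possibly reducible) nonnegative matrix $B$ exists by the general Perron--Frobenius theorem, with $\vv\T\xx > 0$ and the strict componentwise inequality $B\xx < s\xx$ together giving $\rho(B) < s$, hence both the M-matrix representation and nonsingularity. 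Your emphasis on where the strictness enters is exactly the right bookkeeping, and the weighted max-norm alternative is an equally valid shortcut.
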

		
		
				
				
				

\section{Model and Formulation}
	\label{sec:Formulation}
	
Suppose that the overall system consists of $N$ systems, and let $\cA = \{1, \ldots, N\}$ denote the set of systems. 		The set $\cA$ can be partitioned into $\{ \cA_p, \cA_n \}$.  Each system $i \in \cA_p \subsetneq \cA$ experiences external attacks from malicious actors in accordance with a Poisson process with rate $\lambda_i > 0$. On the other hand, the systems in $\cA_n$ do not experience external attacks and, hence, $\lambda_i =0$ for all $i \in \cA_n$. When a system experiences an attack, it suffers an infection and subsequent economic losses with some probability, called {\em breach probability}. 

In addition to external attacks from malicious actors, systems also experience {\em secondary} attacks from other infected systems. Thus, even the nodes in $\cA_n$ can experience secondary attacks. When a system suffers a secondary attack, it becomes infected with the same breach probability mentioned above. In other words, the breach probability is the same whether the attack is external or secondary. 

The breach probability of system $i$ depends on the security investment on the system: let $s_i \in \R_+$ be the security investment on system $i$ (e.g., investments in monitoring and diagnostic tools). The breach probability of system $i$ is determined by some function $q_i: \R_+ \to (0, 1]$. In other words, when the operator invests $s_i$ on system $i$, its breach probability is equal to $q_i(s_i)$. We assume that $q_i$ is decreasing, strictly convex and continuously differentiable for all $i \in {\cal A}$. It was shown~\cite{Bary-WEIS12} that, under some conditions, the breach probability is decreasing and log-convex (hence, strictly convex).

This can model the spread of virus/malware or failures in complex systems. The rate at which the infection of system $i$ causes that of another system $j$ is denoted by $\beta_{i,j} \in \R_+$. When $\beta_{i, j} > 0$, we say that system $i$ supports system $j$ or system $j$ depends on system $i$. Let $B = [b_{i,j} : i, j \in \mathcal{A}]$ be an $N \times N$ matrix that describes the infection rates among systems, where the element $b_{i,j}$ is equal to $\beta_{j,i}$. We adopt the convention $\beta_{i,i} = 0$ for all $i \in {\cal A}$. 

When system $i$ falls victim to an attack and becomes infected, the operator incurs costs at a certain rate for recovery (e.g., inspection and repair of servers). Besides recovery costs, the infection of system $i$ may also cause economic losses if, for example, some servers in system $i$ have to be taken offline for inspection and repair and are inaccessible during the period to other systems that depend on the servers. To model the recovery costs and economic losses, we assume that the infection of system $i$ causes total losses of $c_i$ per unit time. Recovery times of system $i$ following its infections are modeled using independent and identically distributed exponential random variables with parameter $\delta_i > 0$. Furthermore, the recovery times of different systems are mutually independent.

\subsection{Model}  
    \label{subsec:Model}
    
Define a directed graph ${\cal G} = ({\cal A}, {\cal E})$, where a directed edge from system $i$ to system $j$, denoted by $(i, j)$, belongs to the edge set ${\cal E}$ if and only if $\beta_{i,j} > 0$. Unlike in our earlier studies \cite{mai_globecom20, mai_ToN21}, here we do not assume that matrix $B$ is irreducible; instead, we assume that the graph ${\cal G}$ is only weakly connected, but not strongly connected.\footnote{If the graph $\mathcal{G}$ is not weakly connected, we can consider each weakly connected component of $\mathcal{G}$ separately.} We say that node $i$ is {\em exposed} to attackers if there is a directed path from some node $j \in \cA_p$ to node $i$, and denote the set of exposed nodes by $\mathcal{A}_E$. We call the remaining nodes {\em accessible} to the attackers and denote the set of accessible nodes by $\mathcal{A}_A$.

Let us comment on the distinction between `exposed' versus `accessible' systems. In our model, exposed systems may represent systems known to malicious actors, which communicate or exchange information with each other frequently. Thus, the attackers can target them either directly via external attacks or indirectly through secondary attacks. On the other hand, communication or exchange of information between accessible systems and exposed systems is highly asymmetric; most of communication is from accessible systems to exposed systems. Due to very limited communication from exposed systems to accessible systems, their infection rates by exposed systems are difficult to estimate reliably. For this reason, the infection rates are set to zero in our model even though they can still be vulnerable to occasional infections. As we will show, such systems can still serve as reservoir for infection in that the stable equilibrium for some of them may be endemic and they will be at `infected' state with strictly positive probability at steady state and, thus, can infect other exposed systems. For this reason, it is important to model such accessible systems.  

A weakly connected network can be partitioned into a set of maximally strongly connected components (MSCCs) $\{C_1, C_2, \ldots, C_m \} =: \cV\sC$ \cite{Khanafer-CDC14}.\footnote{A subgaph of a directed graph is maximally strongly connected if (a) it is strongly connected and (b) adding another node leads to a subgraph that is no longer strongly connected.} The set $\mathcal{A}_E$ comprises a subset of MSCCs, and $\mathcal{A}_A$ includes the remaining MSCCs. Based on this observation, we construct another directed graph ${\cal G}\sC = ({\cal V}\sC, {\cal E}\sC)$, where a directed edge from $v$ to $v'$ in ${\cal E}\sC$ indicates $\beta_{i,j} > 0$ for some system $i$ in $v$ and another system $j$ in $v'$. Since the vertices in ${\cal V}\sC$ are MSCCs, there is no directed cycle in ${\cal G}\sC$, i.e., ${\cal G}\sC$ is a directed acyclic graph (DAG). Using this property, we can show that ${\cal V}\sC$ has leaf vertices with no incoming edges; if this were not true, every vertex in $\cV\sC$ has an incoming edge. Since $\cV\sC$ is a finite set, this implies that there is a directed cycle, which contradicts the assumption that the vertices in $\cV\sC$ are MSCCs. 
			
We partition the vertex set ${\cal V}\sC$ into $\{ {\cal V}_0, \ldots, {\cal V}_m \}$, where ${\cal V}_\ell$, $\ell = 0, 1, \ldots, m$, is the set of MSCCs whose {\em maximum} distance from leaf vertices is $\ell$.\footnote{Here, the maximum distance from leaf vertices is defined to be the maximum among the maximum distances from the leaf vertices, i.e., the length of the longest path from any leaf vertex.} Obviously, ${\cal V}_0$ is the set of leaf vertices. Note that there is no directed edge coming into any vertex in ${\cal V}_k$ from any other vertex in ${\cal V}_\ell$, $\ell \geq k$. Hence, the frequency of attacks experienced by a system $i$ that belongs to some MSCC in ${\cal V}_k$ depends only on the states of systems in $\cup_{\ell = 0}^{k-1} {\cal V}_\ell$ along with $\lambda_i$.

\subsection{Dynamics and Equilibria}
    \label{subsec:DynamicsEqmb}
    
We adopt a similar framework used in \cite{mai_globecom20, mai_ToN21} and use the SIS model to capture the evolution of the system state. Let $p_i(t)$ be the probability that system $i$ will be at the `infected' state $(I)$ at time $t \in \R_+$, and define $\bp(t) := (p_i(t) : i \in \cA)$. The dynamics of $\pb(t)$ are approximated by the following (Markov) differential equations for $t \in \R_+$:
\begin{align} 
& \dot{\bp}(t)
= (\1 - \bp(t)) \circ \qb(\bs) \circ \big(
\blambda + B \bp(t) \big) 
    - \bdelta \circ \bp(t) 
    \label{eq:pdot_whole}
\end{align} 
where $\bp(0)\in [0,1]^N$,  $\bs := (s_i : i \in \cA)$ is the security investment vector, and $\qb(\bs) = (q_i(s_i) : i \in \cA)$ is the corresponding breach probability vector.
		
Suppose that for each security investment vector $\bs$, $\pb(t)$ converges to a unique stable equilibrium $\bar{\bp}(\bs)$ (the existence and uniqueness of such an equilibrium will be addressed in the subsequent section). Since the unique stable equilibrium of the differential system in \eqref{eq:pdot_whole} specifies the probability that each system will be infected at steady state, the average cost of the system is given by 
\[
F(\bs) := w(\bs) + \cb\T \bar{\pb}(\bs),
\]
where $w(\bs)$ is the cost of investing $\bs$ in the security of the systems (e.g., $w(\bs) = \sum_{i \in \cA} s_i$), and $\cb := (c_i : i \in \cA)$. 
We are interested in solving the following  problem: 
\begin{align} \label{eq:constr1}
F^* := \min_{\bs \in \cS} \  F(\bs)
    = \min_{\bs \in \cS}  \ \big( w(\bs) 
	+ \cb\T \bar{\pb}(\bs) \big)
\end{align}
where $\cS \subset \R_+^N$ is the feasible set for $\bs$. Throughout the paper, we assume that the cost function $w$ is a convex function and the feasible set $\cS$ is a convex set. 

The main challenge in solving this problem is that $\bar{\pb}(\bs)$ does not have a closed-form expression and need not be convex. For this reason, we consider the following alternative formulation with a higher dimension.
From \eqref{eq:pdot_whole}, for fixed $\bs$, $\bar{\bp}(\bs) \in [0,1]^N$ is a solution to the equation:
\[
(\1 - \bp) \circ \qb(\bs) \circ \big(\blambda + B \bp \big) 
= \bdelta \circ \bp
\]
Since $\qb(\bs) > \0$, the above equation is equivalent to
\begin{equation}    \label{eq:eqbm_cond1}
(\1 - \bp) \circ \big(
\blambda +  B \bp \big) 
- \qb(\bs)^{-1} \circ\bdelta \circ \bp = \0 , 
\end{equation} 
where $\qb(\bs)^{-1} := (q_i(s_i)^{-1} : i \in \cA)$. 
Now, \eqref{eq:constr1} can be reformulated in a following more explicit form:
\begin{align*} 
\hspace{-0.45in}{\bf [P]} \hspace{0.5in} 
\min_{\bs \in \cS, \bp \in [0,1]^N} & \ 
f(\bs,\bp) := w(\bs) + \cb\T \pb \lb
\mbox{subject to} & \quad  \eqref{eq:eqbm_cond1}
\end{align*}
Unfortunately, depending on $\blambda$ and $B$, for fixed $\bs$, there may be more than one $\pb$ that satisfies \eqref{eq:eqbm_cond1}, rendering the problem nonconvex. Note that this problem does not arise when $B$ is irreducible (i.e., $\mathcal{G}$ is strongly connected) and $\blambda \gneq \0$ (as considered in \cite{mai_ToN21}) because the uniqueness of the solution is already ensured. However, this is not the case when $\mathcal{G}$ is only weakly connected.

Our main idea to tackling this issue is as follows. For our problem {\bf [P]}, we are interested in a solution of \eqref{eq:eqbm_cond1} which is a stable equilibrium of \eqref{eq:pdot_whole}. 
%
Even when the stable equilibrium is unique, explicitly computing the unique stable equilibrium to carry out the optimization in \eqref{eq:constr1} does not lead to a computationally efficient approach because the optimization problem is not convex. 
In order to skirt this issue, in the following section, we propose a more practical approach based on perturbed problems, which then allows us to leverage the efficient algorithms proposed in \cite{mai_ToN21}.

\section{Perturbed Problem and Main Results}
    \label{sec:Perturbed}

In this section, we describe how we can find a good solution to the problem in \eqref{eq:constr1} in a computationally efficient manner. To this end, we first construct a new approximated problem by perturbing the attack arrival rate $\blambda$ by adding a nonnegative vector $\bvarepsilon \gneq \0$. Although we can work with any nonnegative vector $\bvarepsilon$ such that a unique stationary vector $\bar{\pb}$ is strictly positive, in order to simplify our exposition, we assume that the perturbation vector $\bvarepsilon$ takes the form $\bvarepsilon = \epsilon \1$ with $\epsilon > 0$.\footnote{As long as we perturb the attack arrival rate of at least one system in each accessible MSCC (by $\epsilon$), our results continue to hold.} In other words, we perturb the external attack rate of every system by $\epsilon$. For fixed $\epsilon \geq 0$, we define $\blambda^{\epsilon} := \blambda + \epsilon \1$. 

Suppose that the security investment $\bs$ is fixed. Then, 
for all $\epsilon > 0$, there is unique $\bar{\bp}^{\epsilon}(\bs) > \0$ that satisfies the following~\cite{mai_ToN21}:
\begin{equation}    \label{eq:eqbm_cond2}
(\1 - \bp) \circ \big(
\blambda^{\epsilon} +  B \bp \big) 
- \qb(\bs)^{-1} \circ \bdelta \circ \bp = \0
\end{equation} 
Define 
\begin{align} 
F_{\epsilon}^* := \min_{\bs \in \cS}\quad  
\big( w(\bs) + \cb\T \bar{\bp}^{\epsilon}(\bs) \big) .
\label{eq:PertProb}
\end{align}
Our approach is as follows: first, we know that, for fixed $\epsilon > 0$, $\bar{\bp}^{\epsilon}(\bs)$ is continuous in $\bs$~\cite{mai_ToN21}. We will prove that, for fixed $\bs$, $\bar{\bp}^{\epsilon}(\bs)$ is continuous in $\epsilon \geq 0$. From the continuity of the objective function, this implies that $F^\star_{\epsilon}$ is continuous in $\epsilon \geq 0$. Finally, we can solve the perturbed problem for some small $\epsilon$, and use the solution to the perturbed problem as an approximated solution to the original problem with $\epsilon = 0$. 


We note that when $\epsilon = 0$, there could be multiple solutions to \eqref{eq:eqbm_cond1} (or, equivalently,  \eqref{eq:eqbm_cond2} with $\epsilon = 0$). Therefore, in order to make use of this observation, we need to establish that $\lim_{\epsilon \to 0} \bar{\pb}^\epsilon(\bs)$ exists and coincides with the {\em unique} stable equilibrium of \eqref{eq:eqbm_cond1}. To this end, the following lemma establishes the existence of a unique stable equilibrium of \eqref{eq:eqbm_cond1}. 


\begin{prop}   \label{prop:unique_eqbm}
For fixed $\bs \in \cS$, there is a unique stable equilibrium $\bar{\pb}(\bs)$ of \eqref{eq:pdot_whole}, which satisfies \eqref{eq:eqbm_cond1}.
\end{prop}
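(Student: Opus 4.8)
The plan is to exploit the cascade structure of \eqref{eq:pdot_whole} induced by the MSCC decomposition. Order the systems so that the state splits as $\bp = (\bp_{\cV_0}, \bp_{\cV_1}, \ldots, \bp_{\cV_m})$ along the layers $\cV_0, \ldots, \cV_m$. Because no edge enters a vertex of $\cV_k$ from a vertex of $\cV_\ell$ with $\ell \ge k$, and because distinct MSCCs in the same layer share no edge, the right-hand side of \eqref{eq:pdot_whole} restricted to the coordinates of a single MSCC $C \in \cV_k$ depends only on $\bp_C$ and on the coordinates in strictly upstream MSCCs (those in $\cV_0 \cup \cdots \cup \cV_{k-1}$). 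Hence \eqref{eq:pdot_whole} is a finite cascade of per-MSCC SIS subsystems
\[ \dot{\bp}_C \;=\; (\1 - \bp_C)\circ\qb_C(\bs)\circ\big(\bu_C + B_C\,\bp_C\big) \;-\; \bdelta_C\circ\bp_C , \]
where $B_C$ is the (irreducible) within-$C$ block of $B$ and $\bu_C \ge \0$ has $i$-th entry $\lambda_i + \sum_j b_{i,j}\,p_j$, the sum running over systems $j$ in MSCCs strictly upstream of $C$.

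I would then induct on the layer index $k$. For the base case, each $C \in \cV_0$ has no incoming edge, so $\bu_C \equiv (\lambda_i : i \in C)$ is constant and the $C$-subsystem is autonomous. If $C$ meets $\cA_p$, then $\bu_C \gneq \0$ and the result of \cite{mai_ToN21} for irreducible $B$ with a nonzero external attack vector applies directly: a unique equilibrium in $[0,1]^{|C|}$, strictly positive, globally asymptotically stable, with Hurwitz Jacobian. If instead $C \subseteq \cA_n$, then $\bu_C = \0$ and the subsystem is a closed irreducible SIS model, for which the classical threshold dichotomy yields a unique stable equilibrium --- the disease-free point $\0$ when $C$ is subcritical, and the unique endemic point otherwise (with $\0$ then unstable). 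For the inductive step, once layers $0,\ldots,k-1$ have been shown to converge to their unique stable equilibrium, the input $\bu_C(t)$ to each $C \in \cV_k$ converges to a constant $\bar\bu_C \ge \0$, so the $C$-subsystem is asymptotically autonomous with limit system of exactly the two types above; the same dichotomy gives a unique stable equilibrium $\bar\bp_C$. To convert convergence of $\bu_C(t)$ into convergence of $\bp_C(t)$, I would use that the SIS vector field is cooperative on $[0,1]^{|C|}$ and monotone in $\bu_C$, sandwiching the solution between solutions of the autonomous systems with constant inputs near $\bar\bu_C$ (or, equivalently, invoking a standard convergence theorem for asymptotically autonomous systems).

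Two items then finish the proof. For stability of the equilibrium $\bar\bp(\bs)$ assembled layer by layer: with the layered ordering, the Jacobian of \eqref{eq:pdot_whole} at $\bar\bp(\bs)$ is block lower triangular with diagonal blocks the per-MSCC Jacobians at the $\bar\bp_C$, each Hurwitz by the base/inductive analysis; thus every eigenvalue has negative real part, $\bar\bp(\bs)$ is (locally exponentially) stable, and it satisfies \eqref{eq:eqbm_cond1} by construction. For uniqueness among stable equilibria: if $\bp'$ is any stable equilibrium, then $\bp'_{\cV_0}$ must be a stable equilibrium of the autonomous layer-$0$ system (otherwise a perturbation confined to a layer-$0$ MSCC, whose sub-dynamics is exact, would drive the full trajectory away from $\bp'$), so $\bp'_{\cV_0} = \bar\bp_{\cV_0}$ by per-MSCC uniqueness; proceeding up the layers, with the already-pinned upstream values frozen, the same argument forces $\bp'_{\cV_k} = \bar\bp_{\cV_k}$ for every $k$.

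I expect the main obstacle to be the cascade step proper --- upgrading ``the limiting autonomous layer-$k$ system has a globally asymptotically stable equilibrium'' to ``the actual nonautonomous layer-$k$ trajectory converges to it,'' done uniformly enough to carry the induction, together with the careful bookkeeping for MSCCs whose effective external rate $\bar\bu_C$ vanishes. The latter is the genuinely ``weakly connected'' phenomenon: both $\0$ and an endemic point may be equilibria of such a $C$, and one must verify that exactly one is stable (and, at the critical case, that $\0$ remains the unique stable equilibrium). The cooperativity of the SIS dynamics and the M-matrix characterisation of Lemma~\ref{lem_M_Matrix} are the tools I would rely on to keep these arguments clean.
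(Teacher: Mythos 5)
Your proposal follows essentially the same route as the paper's proof: induction over the layered MSCC decomposition, with the per-layer case analysis (nonzero effective attack rate versus the zero-rate threshold dichotomy governed by $\rho(D_v^{-1}B_v)$) matching the paper's Steps 1 and 2. You are in fact more careful than the paper at exactly the points it glosses over --- the asymptotically autonomous convergence step, the block-triangular Jacobian argument for stability of the assembled equilibrium, and the uniqueness among stable equilibria --- so the approach is sound and complete.
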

\begin{proof}
For each MSCC $v$ in $\cV\sC$, we denote by $B_{v}$ the corresponding submatrix of $B$ (only with columns and rows of $B$ associated with the nodes in $v$). Similarly, given a vector (e.g., $\qb(\bs), \bdelta, \blambda$), we use the subscript $v$ to denote the subvector associated with nodes in $v$ (e.g., $\qb_v(\bs), \bdelta_v, \blambda_v$).

We prove the proposition by induction: 
\myskip

\noindent{\bf Step 1:} 
Consider an MSCC $v$ in $\cV_0$. Then, since $v$ is strongly connected and there is no incoming edge to $v$, there is a unique stable equilibrium to which $\pb_v(t)$ converges, starting with any $\pb_v(0) \neq \0$~\cite{mai_ToN21}: (i) If $v \subset \mathcal{A}_E$, then there is a strictly positive unique stable equilibrium $\bar{\pb}_v(\bs)$. (ii) For $v \subset \mathcal{A}_A$, there are two possibilities -- (a) if $\rho({D_{v}}^{-1} B_{v}) > 1$, where $D_{v} = \texttt{diag}(\qb_{v}^{-1}(\bs) \circ \bdelta_{v})$, there are two equilibria -- one strictly positive stable equilibrium $\bar{\pb}_{v}(\bs)$ and an unstable equilibrium $\0$; and (b) if $\rho({D_{v}}^{-1} B_{v}) \leq 1$, $\bar{\pb}_{v}(\bs) = \0$ is the unique stable equilibrium. 
\myskip

\noindent {\bf Step 2:} 
Assume that there is a unique stable equilibrium for all MSCCs in $\cup_{\ell=0}^k \cV_\ell =: \tilde{\cV}_k$, $k \in  \{0, \ldots, m-1\}$. Then, $\pb_{v'}(t)$ converges to $\bar{\pb}_{v'}(\bs)$ for all $v' \in \tilde{\cV}_k$. This implies that, for every system $i$ in some MSCC $v \in \cV_{k+1}$, its total attack rate, including both external attacks from malicious actors and secondary attacks from other nodes not in MSCC $v$, converges to $\tilde{\lambda}_i(\bs) \geq \lambda_i$ as $t \to \infty$. Hence, we can assume that the new attack rates of the systems in $v \in \tilde{\cV}_{k+1}$ are given by $(\tilde{\lambda}_i(\bs) : i \in v)$ and follow the same argument described in Step 1 above to find the unique stable equilibrium for each MSCC $v \in \cV_{k+1}$. Since this is true for all $k = 0, \ldots, m-1$, this tells us that there is a unique stable equilibrium $\bar{\pb}(\bs)$ for the overall system. 
\end{proof}

\subsection{Continuity of Stable Equilibrium $\bar{\pb}$}

For $\bs \in \cS$, define $\gb^\bs:\mathbb{R}^{n+1} \to \mathbb{R}^n$, where
\begin{equation}    \label{eq:g}
\gb^\bs(\epsilon, \pb) = (\1 - \bp) \circ \big(
	\blambda^{\epsilon} + B \bp \big) 
		- \qb(\bs)^{-1} \circ \bdelta \circ \bp  . 
\end{equation}
Clearly, $\gb^\bs$ is a continuously differentiable function and its partial derivative w.r.t. $\pb$ is given by
$$
\partial_{\pb} \gb^\bs(\epsilon, \pb) 
\! = \! \diag(\1 - \pb) B - \diag(\qb(\bs)^{-1} \circ\bdelta 
	\! + \! \blambda^{\epsilon} \! + \! B \pb) .
$$

\begin{prop}    \label{prop:Pess_continuity}
For each  $\bs \in \cS$, $\bar{\bp}^{\epsilon}(\bs)$ is continuous in $\epsilon > 0$.
\end{prop}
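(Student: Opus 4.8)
The plan is to apply the implicit function theorem to the defining equation $\gb^\bs(\epsilon, \pb) = \0$ at each fixed $\epsilon > 0$, using the explicit formula for $\partial_{\pb} \gb^\bs$ already written down in the excerpt. The key point is that, for $\epsilon > 0$, the stable equilibrium $\bar{\pb}^\epsilon(\bs)$ is strictly positive (this is the cited result from \cite{mai_ToN21}), and I would show that at such a point the Jacobian $\partial_{\pb} \gb^\bs(\epsilon, \bar{\pb}^\epsilon(\bs))$ is a nonsingular matrix — in fact the negative of a nonsingular M-matrix. Granting that, the implicit function theorem gives a continuously differentiable local solution $\epsilon \mapsto \pb(\epsilon)$ through $(\epsilon_0, \bar{\pb}^{\epsilon_0}(\bs))$; since $\bar{\pb}^\epsilon(\bs)$ is the \emph{unique} strictly positive solution of \eqref{eq:eqbm_cond2} for each $\epsilon>0$ (again by \cite{mai_ToN21}), this local solution must coincide with $\bar{\pb}^\epsilon(\bs)$ on a neighborhood of $\epsilon_0$, which yields continuity (indeed smoothness) at $\epsilon_0$. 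As $\epsilon_0 > 0$ was arbitrary, continuity on $(0,\infty)$ follows.

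The main work, and the main obstacle, is establishing nonsingularity of the Jacobian at $\bar{\pb} := \bar{\pb}^\epsilon(\bs)$. Write $M := \diag(\qb(\bs)^{-1}\circ\bdelta + \blambda^\epsilon + B\bar{\pb}) - \diag(\1 - \bar{\pb}) B$, so that $\partial_{\pb}\gb^\bs(\epsilon,\bar{\pb}) = -M$; I would argue $M$ is a nonsingular M-matrix. It is clearly a Z-matrix, since $\diag(\1-\bar{\pb})B \ge 0$ entrywise (using $\bar{\pb} \in [0,1]^N$ — one should note $\bar{\pb} < \1$ componentwise, which holds because $q_i(s_i)^{-1}\delta_i \bar p_i = (1-\bar p_i)(\lambda_i^\epsilon + (B\bar\pb)_i) > 0$ forces $\bar p_i < 1$) and its diagonal is nonnegative. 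To get nonsingularity I would exhibit a strictly positive vector mapped by $M$ to a strictly positive vector, invoking Lemma~\ref{lem_M_Matrix}(b). The natural candidate is $\bar{\pb}$ itself: a direct computation using the equilibrium identity $(\1-\bar\pb)\circ(\blambda^\epsilon + B\bar\pb) = \qb(\bs)^{-1}\circ\bdelta\circ\bar\pb$ gives $M\bar{\pb} = \diag(\qb(\bs)^{-1}\circ\bdelta)\bar{\pb} \circ (\text{something}) + \ldots$; more carefully, $M\bar\pb = (\qb(\bs)^{-1}\circ\bdelta + \blambda^\epsilon + B\bar\pb)\circ\bar\pb - (\1-\bar\pb)\circ(B\bar\pb)$, and substituting the equilibrium relation for the term $(\1-\bar\pb)\circ B\bar\pb = \qb(\bs)^{-1}\circ\bdelta\circ\bar\pb - (\1-\bar\pb)\circ\blambda^\epsilon$ yields $M\bar\pb = \blambda^\epsilon\circ\bar\pb + (B\bar\pb)\circ\bar\pb + (\1-\bar\pb)\circ\blambda^\epsilon$, wait — I would recompute this cleanly; the upshot I expect is $M\bar\pb = \bar\pb \circ \blambda^\epsilon + \bar\pb\circ(B\bar\pb) + (\1 - \bar\pb)\circ\blambda^\epsilon > \0$ since $\blambda^\epsilon > \0$ and $\bar\pb > \0$. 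If $\bar\pb$ does not work directly, an alternative is to use that $\rho$ of the relevant nonnegative matrix is strictly less than the diagonal scaling at a strictly positive equilibrium — a standard fact for stable SIS equilibria — but the explicit-vector route via Lemma~\ref{lem_M_Matrix}(b) is cleanest.

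Two smaller points I would attend to: first, confirming $\bar{\pb}^\epsilon(\bs) < \1$ componentwise (done above) so the Z-matrix property is clean; second, noting that the implicit function theorem only gives continuity of \emph{a} branch, so the uniqueness of strictly positive solutions of \eqref{eq:eqbm_cond2} at each $\epsilon > 0$ — and the fact that $\bar{\pb}^{\epsilon'}(\bs)$ stays strictly positive for $\epsilon'$ near $\epsilon$ — is what pins the IFT branch to the object of interest. I do not expect to need any quantitative modulus of continuity; qualitative continuity suffices for the downstream argument that $F^\star_\epsilon$ is continuous. Note this proposition deliberately addresses only $\epsilon > 0$; continuity at $\epsilon = 0$ (where uniqueness fails and one must connect to Proposition~\ref{prop:unique_eqbm}) is presumably handled separately.
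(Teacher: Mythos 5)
Your proposal is correct and follows essentially the same route as the paper: apply the implicit function theorem after showing that $-\partial_{\pb}\gb^\bs(\epsilon,\bar{\pb}^{\epsilon}(\bs))$ is a nonsingular M-matrix via Lemma~\ref{lem_M_Matrix}(b) with test vector $\bar{\pb}^{\epsilon}$ itself, and your computation $M\bar{\pb}^\epsilon = \blambda^\epsilon\circ\bar{\pb}^\epsilon + (B\bar{\pb}^\epsilon)\circ\bar{\pb}^\epsilon + (\1-\bar{\pb}^\epsilon)\circ\blambda^\epsilon$ simplifies to the paper's $B\bar{\pb}^\epsilon\circ\bar{\pb}^\epsilon + \blambda^\epsilon > \0$. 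Your added remarks on $\bar{\pb}^\epsilon < \1$ and on pinning the IFT branch to the unique positive solution are sound refinements of details the paper leaves implicit.
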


\begin{proof}		
In order to prove the proposition, we make use of the following lemma. 
\begin{lem}   \label{lemma_M_M_matrix}
The matrix $-\partial_{\pb} \gb^\bs(\epsilon, \bar{\bp}^{\epsilon}(\bs))$ is a nonsingular M-matrix for any $\epsilon > 0$.
\end{lem}
\begin{proof}
We denote $\bar{\bp}^{\epsilon}(\bs)$ by $\bar{\bp}^{\epsilon}$ for notational simplicity. Let $M = -\partial_{\pb} \gb^\bs(\epsilon, \bar{\bp}^{\epsilon})$. Clearly, $M$ is a Z-matrix. Moreover, 
\beqan
M \bar{\pb}^\epsilon 
\myeq - (\1 - \bar{\pb}^\epsilon) \circ B \bar{\pb}^\epsilon
    + \qb(\bs)^{-1} \circ \bdelta \circ \bar{\pb}^\epsilon
    + \blambda^\epsilon \circ \bar{\pb}^\epsilon \\
&& + B \bar{\pb}^\epsilon \circ \bar{\pb}^\epsilon \\
\myeq - (\1 - \bar{\pb}^\epsilon) \circ (\blambda^\epsilon + B \bar{\pb}^\epsilon)
    + \qb(\bs)^{-1} \circ \bdelta \circ \bar{\pb}^\epsilon \\ 
&& + B \bar{\pb}^\epsilon \circ \bar{\pb}^\epsilon 
    + \blambda^\epsilon  \\
\myeq B \bar{\pb}^\epsilon \circ \bar{\pb}^\epsilon 
    + \blambda^\epsilon, 
\eeqan
where the last equality follows from $\gb^\bs(\epsilon, \bar{\pb}^\epsilon) = \0$. Since $\blambda^{\epsilon} > \0$ and $\bar{\bp}^{\epsilon} > \0$ for $\epsilon > 0$, we have $\blambda^{\epsilon} + \bar{\bp}^{\epsilon} \circ (B \bar{\bp}^{\epsilon} ) > \0$. Thus, Lemma~\ref{lem_M_Matrix}-(b) implies that $M$ is a nonsingular M-matrix. 
\end{proof}

Since $\partial_{\pb} \gb^\bs(\epsilon, \bar{\bp}^{\epsilon}(\bs))$ is invertible by Lemma~\ref{lemma_M_M_matrix}, the implicit function theorem tells us that $\bar{\bp}^{\epsilon}(\bs)$ is continuous in $\epsilon > 0$. This proves the proposition. 
\end{proof}

For $\bs \in \cS$, define a mapping $\bar{\blambda}^\bs: \R_+ \to \R_+^N$, where, for each $v \in \cV\sC$, 
\begin{equation*}
\bar{\blambda}^\bs_v(\epsilon) 
= \begin{cases}
    \blambda^{\epsilon}_v + \big( B_{-v} \bar{\pb}^{\epsilon}_{-v}(\bs) \big)_v & \mbox{if } \epsilon > 0, \\
    \blambda_v + \big( B_{-v} \bar{\pb}_{-v}(\bs) \big)_v & \mbox{if } \epsilon = 0,  
\end{cases}
\end{equation*}
where $B_{-v}$ is a submatrix of $B$ without the columns corresponding to the systems that belong to the MSCC $v$, and $\bar{\pb}^{\epsilon}_{-v}(\bs)$ and $\bar{\pb}_{-v}(\bs)$ are the subvectors of $\bar{\pb}^{\epsilon}(\bs)$ and $\bar{\pb}(\bs)$, respectively, obtained after removing the elements for the systems in $v$. Obviously, $\bar{\blambda}^\bs(\epsilon)$ tells us the total attack rates, including both external attacks from malicious actors and secondary attacks coming from other systems that do not belong to the same MSCC, at the unique stable equilibrium as a function of the security investments $\bs$ and perturbation $\epsilon$. 

We find it convenient to define the following maps: for each $v \in \cV\sC$, let $n_v := | v |$. For each $\bs \in \cS$, define $\hat{\gb}^\bs_v : \R^{n_v+1} \to \R^{n_v}$, where
\begin{eqnarray}
\ \hat{\gb}^\bs_v(\epsilon, \pb_{v}) 
\! = \! (\1 \! - \! \pb_{v}) \! \circ \! \big( \bar{\blambda}^\bs_v(\epsilon) \! + \! B_v \pb_v \big) 
    \! - \! \qb_v(\bs)^{-1} \! \circ \! \bdelta_v \! \circ \! \pb_v, 
    \label{eq:ghat}
\end{eqnarray}
where $B_v$ is the $n_v \times n_v$ submatrix of $B$ with columns and rows corresponding to the systems in $v$. Clearly, $\hat{\gb}^\bs_v$ is the mapping $\gb$ defined in \eqref{eq:g} restricted to the MSCC $v$ with fixed attack rates $\bar{\blambda}_v^\bs(\epsilon)$ coming from outside the MSCC $v$. We point out that, from the viewpoint of a system, there is no distinction between an external attack from a malicious actor or a secondary attack coming from another system in a different MSCC. Clearly, for fixed $\epsilon > 0$, $\bar{\pb}^{\epsilon}(\bs)$ is the unique solution that satisfies
\beqa
\hat{\gb}^\bs_v \big( \epsilon, \bar{\bp}^{\epsilon}_v(\bs) \big) = \0 
\quad \mbox{for all } v \in \cV\sC. 
    \label{eq:cond1}
\eeqa
Similarly, $\bar{\pb}(\bs)$ satisfies
\beqa
\hat{\gb}^\bs_v\big( 0, \bar{\bp}_v(\bs) \big) = \0 \quad \mbox{for all } v \in \cV\sC. 
    \label{eq:cond2}
\eeqa

\begin{prop}    \label{prop:continuity2}
Fix $\bs \in S$. Then, for any decreasing positive sequence $\{\epsilon_l : l \in \N\}$ with $\lim_{l \to \infty} \epsilon_l = 0$, we have $\lim_{l \to \infty} \bar{\pb}^{\epsilon_l}(\bs) = \bar{\pb}(\bs)$.
\end{prop}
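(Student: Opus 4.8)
The plan is to prove the convergence by induction on the layers $\cV_0, \cV_1, \ldots, \cV_m$ of the DAG $\cG\sC$, mirroring the structure of the proof of Proposition~\ref{prop:unique_eqbm}, and at each layer to combine a compactness argument (the $\bar\pb^{\epsilon_l}(\bs)$ all lie in the compact set $[0,1]^N$) with the uniqueness of the stable equilibrium to pin down the limit. First I would note that since $\{\bar\pb^{\epsilon_l}(\bs)\}_{l\in\N} \subset [0,1]^N$, every subsequence has a convergent sub-subsequence; it therefore suffices to show that every convergent subsequence has the same limit, namely $\bar\pb(\bs)$. So fix a convergent subsequence with limit $\pb^\circ := \lim_{k} \bar\pb^{\epsilon_{l_k}}(\bs)$, and the goal becomes $\pb^\circ = \bar\pb(\bs)$.

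The base case is the leaf layer $\cV_0$. For $v \in \cV_0$ there are no incoming edges, so $\bar\blambda^\bs_v(\epsilon) = \blambda^\epsilon_v$ and $\hat\gb^\bs_v(\epsilon,\pb_v) = \gb^\bs$ restricted to $v$ with attack rate $\blambda_v + \epsilon\1$. Passing to the limit in $\hat\gb^\bs_v(\epsilon_{l_k}, \bar\pb^{\epsilon_{l_k}}_v(\bs)) = \0$ using continuity of $\hat\gb^\bs_v$ gives $\hat\gb^\bs_v(0, \pb^\circ_v) = \0$, i.e.\ $\pb^\circ_v$ is \emph{some} equilibrium of the $v$-subsystem. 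It remains to rule out that $\pb^\circ_v$ is an \emph{unstable} equilibrium different from $\bar\pb_v(\bs)$. By Step~1 of Proposition~\ref{prop:unique_eqbm} the only candidate other than $\bar\pb_v(\bs)$ is $\pb^\circ_v = \0$, which can occur only in the accessible case $v \subset \cA_A$ with $\rho(D_v^{-1}B_v) > 1$. To exclude this I would show $\bar\pb^\epsilon_v(\bs)$ is \emph{monotone decreasing} in $\epsilon$ (or at least bounded below away from $\0$), so its limit cannot be $\0$: indeed, $\bar\pb^\epsilon_v(\bs) \ge \bar\pb^{0+}_v(\bs) := $ the strictly positive stable equilibrium of the $\epsilon=0$ system, because a larger attack rate yields a larger stable equilibrium (a monotonicity/comparison fact that follows from the cooperative structure of \eqref{eq:pdot_whole}, or from an M-matrix/inverse-positivity argument on $\hat\gb^\bs_v$ as in Lemma~\ref{lemma_M_M_matrix}). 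Hence $\pb^\circ_v = \bar\pb_v(\bs)$ for all $v \in \cV_0$.

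For the inductive step, suppose $\pb^\circ_{v'} = \bar\pb_{v'}(\bs)$ for all $v'$ in layers $0,\ldots,k$. Then for $v \in \cV_{k+1}$, the external rate $\bar\blambda^\bs_v(\epsilon_{l_k}) = \blambda^{\epsilon_{l_k}}_v + (B_{-v}\bar\pb^{\epsilon_{l_k}}_{-v}(\bs))_v$ converges, by the induction hypothesis and because all relevant $\pb^{\epsilon_{l_k}}_{v'}$ with $v' \in \cup_{\ell\le k}\cV_\ell$ converge to $\bar\pb_{v'}(\bs)$, to $\bar\blambda^\bs_v(0) = \blambda_v + (B_{-v}\bar\pb_{-v}(\bs))_v$ (note only incoming edges from strictly lower layers matter, by the layering property stated before Prop.~\ref{prop:unique_eqbm}). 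Then passing to the limit in $\hat\gb^\bs_v(\epsilon_{l_k},\bar\pb^{\epsilon_{l_k}}_v(\bs)) = \0$ gives $\hat\gb^\bs_v(0,\pb^\circ_v) = \0$, and the same stability/monotonicity argument as in the base case — now with the limiting external rate $\bar\blambda^\bs_v(0)$ in place of $\blambda_v$ — forces $\pb^\circ_v = \bar\pb_v(\bs)$. This closes the induction, so $\pb^\circ = \bar\pb(\bs)$ for every convergent subsequence, which gives $\lim_{l\to\infty}\bar\pb^{\epsilon_l}(\bs) = \bar\pb(\bs)$.

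The main obstacle is the step that excludes the limit landing on the \emph{unstable} zero equilibrium in the accessible supercritical case: the limiting equation $\hat\gb^\bs_v(0,\pb^\circ_v)=\0$ alone is not enough, and one genuinely needs a lower bound on $\bar\pb^\epsilon_v(\bs)$ uniform in small $\epsilon$. I expect the cleanest route is the monotonicity of the stable equilibrium in the attack rate vector (larger $\blambda^\epsilon$ and larger incoming secondary rates give a pointwise-larger stable equilibrium), which should itself follow from the already-established M-matrix property of $-\partial_\pb\gb^\bs$ at the stable equilibrium together with the sign structure of $\partial_\epsilon \gb^\bs = \diag(\1-\bar\pb^\epsilon) \ge \0$; a secondary subtlety is ensuring the induction hypothesis is applied to the \emph{whole} lower-layer subvector simultaneously, which is fine since finitely many layers are involved and the subsequence is chosen once at the start.
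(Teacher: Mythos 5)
Your proposal is correct in outline and lands on the same inductive skeleton over the layers $\cV_0,\ldots,\cV_m$ of the DAG, but it resolves the one genuinely delicate case by a different mechanism than the paper. The paper splits each layer into two cases: when the limiting equilibrium is strictly positive (i.e.\ $\bar{\blambda}^\bs_v(0)\neq\0$, or $\bar{\blambda}^\bs_v(0)=\0$ with $\rho(D_v^{-1}B_v)>1$), it applies the implicit function theorem at $(0,\bar{\pb}_v(\bs))$, using the same M-matrix computation as in Lemma~\ref{lemma_M_M_matrix} (now with $B_v\bar{\pb}_v\circ\bar{\pb}_v>\0$ supplied by irreducibility of $B_v$) to get a continuous solution branch that must coincide with $\bar{\pb}^\epsilon_v(\bs)$ by uniqueness of the positive solution for $\epsilon>0$; when $\0$ is the \emph{unique} solution it passes to the limit directly. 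You instead run a compactness/subsequence argument throughout and must then exclude the limit point $\0$ in the accessible supercritical case ($\bar{\blambda}^\bs_v(0)=\0$, $\rho(D_v^{-1}B_v)>1$), which you correctly identify as the crux; your proposed fix---showing $\bar{\pb}^\epsilon_v(\bs)\ge\bar{\pb}_v(\bs)$ because $\bar{\pb}_v(\bs)$ is a sub-equilibrium of the perturbed cooperative dynamics (indeed $\hat{\gb}^\bs_v(\epsilon,\bar{\pb}_v(\bs))=(\1-\bar{\pb}_v(\bs))\circ\bar{\blambda}^\bs_v(\epsilon)\ge\0$)---is sound and closes the gap, though you leave this comparison lemma as a sketch. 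Two cautions: your phrase ``monotone decreasing in $\epsilon$'' should read increasing (consistent with Proposition~\ref{prop_p_increasing_in_eps}), and your alternative mechanism via inverse-positivity of $-\partial_{\pb}\gb^\bs$ only yields monotonicity of $\bar{\pb}^\epsilon$ along $\epsilon>0$, hence existence of the limit, but not the lower bound by the positive equilibrium of the $\epsilon=0$ system; only the cooperative/comparison route (or the paper's IFT route) actually rules out collapse to $\0$. What each approach buys: the paper's IFT argument is shorter and also delivers local smoothness of $\epsilon\mapsto\bar{\pb}^\epsilon_v(\bs)$, while yours avoids the implicit function theorem entirely and, once the comparison lemma is written out, gives a slightly more robust and elementary argument that does not require identifying the IFT branch with $\bar{\pb}^\epsilon_v(\bs)$.
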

\begin{proof} 
Denote $\bar{\pb}_v(\bs)$ and $\bar{\pb}_v^\epsilon(\bs)$ by $\bar{\pb}_v$ and $\bar{\pb}_v^\epsilon$, respectively. Following a similar argument and the notation used in the proof of Proposition~\ref{prop:unique_eqbm}, we will prove the proposition by induction, starting with MSCCs in $\cV_0$:
\myskip

\noindent
\underline{\bf Step 1:} Consider an MSCC $v$ in $\cV_0$. Note that,  since there is no incoming edge to $v$ from any other node in $\cG\sC$, for any $\pb \in [0, 1]^N$, we have $(B \pb)_v = B_v \pb_v$ (or, equivalently, $(B_{-v} \pb_{-v})_v = \0$) and, thus, $\bar{\blambda}^\bs_v(\epsilon) = \blambda_v^\epsilon$ for all $\epsilon \geq 0$. We examine the following two possible cases separately.

{\bf Case 1a. $\bar{\pb}_{v} > \0$: } In this case, the continuity of $\bar{\pb}_v^\epsilon(\bs)$ at $\epsilon = 0$ follows from the implicit function theorem because the Jacobian matrix $\partial_{\pb_{v}} \hat{\gb}_v(0, \bar{\pb}_v)$ is nonsingular as follows:
\beqan
&& \myhb M^\bs_v 
:= - \partial_{\pb_v} \hat{\gb}^\bs_v(0, \bar{\pb}_v) \lb
\myeq - \diag(\1 - \bar{\pb}_v) B_v 
    + \diag\big( \blambda_v
    \! + \! B_v \bar{\pb}_v \! + \! \qb_v(\bs)^{-1} \! \circ \! \bdelta_v \big)
\eeqan
is a Z-matrix. Furthermore, 
\beqan
M^\bs_v \bar{\pb}_v 
\! \myeq \! - (\1 - \bar{\pb}_v) \circ B_v \bar{\pb}_v
    \! + \! \qb_v(\bs)^{-1} \circ \bdelta_v \circ \bar{\pb}_v
    \! +  \blambda_v \circ \bar{\pb}_v \\
&& \! +  B_v \bar{\pb}_v \circ \bar{\pb}_v \\
\myeq \underbrace{- (\1 - \bar{\pb}_v) \circ (\blambda_v + B_v \bar{\pb}_v)
    + \qb_v(\bs)^{-1} \circ \bdelta_v \circ \bar{\pb}_v}_{= - \hat{\gb}^\bs_v(0, \bar{\pb}_v)} \\
&& \! +  B_v \bar{\pb}_v \circ \bar{\pb}_v 
    \! + \!\blambda_v \\
\myeq B_v \bar{\pb}_v \circ \bar{\pb}_v 
    + \blambda_v, 
\eeqan
where the last equality follows from $\hat{\gb}^\bs_v(0, \bar{\pb}_v) = \0$. Since $\bar{\bp}_v > \0$ and $B_v$ is irreducible (as $v$ is strongly connected), we have $B_v \bar{\pb}_v \circ \bar{\pb}_v > \0$. Thus, Lemma~\ref{lem_M_Matrix}-(b) tells us that $M^\bs_v$ is a nonsingular M-matrix. 

{\bf Case 1b. $\bar{\pb}_{v} = \0$: } In this case, $\bar{\pb}_v$ is the unique solution to \eqref{eq:cond2}
(and $\bar{\pb}^\epsilon_v$ is the unique solution to \eqref{eq:cond1} for all $\epsilon > 0$). Because $\hat{\gb}^\bs_v$ defined in \eqref{eq:ghat} is continuous and $\bar{\blambda}^\bs_v(\epsilon_l) = \blambda^{\epsilon_l}_v \to \blambda_v$ as $l \to \infty$, it follows that $\lim_{l \to \infty} \bar{\pb}^{\epsilon}_v = \bar{\pb}_v$. 
\myskip


\noindent
\underline{\bf Step 2:} Assume $\lim_{l \to \infty} \bar{\pb}_{v'}^{\epsilon_l}(\bs) = \bar{\pb}_{v'}(\bs)$ for every MSCC $v' \in \cup_{\ell=0}^k \cV_\ell =: \tilde{\cV}_k$, $k \in  \{0, \ldots, m-1\}$. This means that, for all $v' \in \tilde{\cV}_\ell$, $\ell = 0, \ldots, k$, we have (a) $\blambda^{\epsilon_l}_{v'} \to \blambda_{v'}$ and (b) $\bar{\pb}^{\epsilon_l}_{v'}(\bs) \to \bar{\pb}_{v'}(\bs)$ as $l \to \infty$. As a result, for every $v \in \cV_{k+1}$, $\bar{\blambda}^\bs_v(\epsilon) \to \bar{\blambda}^\bs_v(0)$ as $l \to \infty$. We can show that, for all $v \in \cV_{k+1}$, $\lim_{l \to \infty} \bar{\pb}^{\epsilon_l}_v(\bs) = \bar{\pb}_v(\bs)$ by following an analogous argument used in Step 1. We consider the following two cases:

{\bf Case 2a. (i) $\bar{\blambda}_v^\bs(0) \neq \0$ or (ii) $\bar{\blambda}_v^\bs(0) = \0$ and $\rho(D_v^{-1} B_{v}) > 1$: } In this case, $\bar{\pb}_v(\bs)$ is the unique positive solution to \eqref{eq:cond2}. Thus, the claim follows from the implicit function theorem; the Jacobian matrix $-\partial_{\pb_v} \hat{\gb}^\bs_v(0, \bar{\pb}_v)$ can be shown to be nonsingular using the same argument employed in Case 1a of Step 1 above. 

{\bf Case 2b. $\bar{\blambda}_v^\bs(0) = \0$ and $\rho(D_v^{-1} B_{v}) \leq 1$: } We know $\bar{\pb}_v = \0$ is the unique solution to \eqref{eq:cond2}. Because $\hat{\gb}^\bs_v$ in \eqref{eq:ghat} is continuous and $\hat{\gb}^\bs_v(\epsilon, \bar{\pb}^\epsilon) = \0$, it follows that $\lim_{l \to \infty} \bar{\pb}^{\epsilon_l}_v = \bar{\pb}_v$. 
\end{proof}

One would expect that the minimum cost we can achieve by solving \eqref{eq:PertProb} would not decrease with $\epsilon$. We will show that this is indeed the case. 

\begin{prop}\label{prop_p_increasing_in_eps}
For each  $\bs \in \cS$, $\bar{\bp}^{\epsilon}(\bs)$ is increasing in $\epsilon > 0$.
\end{prop}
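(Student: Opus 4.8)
The plan is to prove the stronger statement that $\epsilon \mapsto \bar{\bp}^{\epsilon}(\bs)$ is continuously differentiable on $(0,\infty)$ with an entrywise nonnegative derivative, from which monotonicity follows immediately by integration. The key input is already available: by Lemma~\ref{lemma_M_M_matrix}, the matrix $M := -\partial_{\bp}\gb^\bs(\epsilon,\bar{\bp}^{\epsilon}(\bs))$ is a nonsingular M-matrix for every $\epsilon > 0$, so $\partial_{\bp}\gb^\bs$ is invertible along the solution branch.

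First I would apply the implicit function theorem exactly as in the proof of Proposition~\ref{prop:Pess_continuity}, but now extracting differentiability (not merely continuity) from the fact that $\gb^\bs$ is $C^1$ and $\partial_{\bp}\gb^\bs$ is invertible at $(\epsilon,\bar{\bp}^{\epsilon}(\bs))$; this gives that $\bar{\bp}^{\epsilon}(\bs)$ is continuously differentiable in $\epsilon > 0$. Differentiating the identity $\gb^\bs(\epsilon,\bar{\bp}^{\epsilon}(\bs)) = \0$ with respect to $\epsilon$, and noting that $\blambda^{\epsilon} = \blambda + \epsilon\1$ enters $\gb^\bs$ only through the term $(\1-\bp)\circ\blambda^{\epsilon}$, so that $\partial_{\epsilon}\gb^\bs(\epsilon,\bp) = \1 - \bp$, I obtain
\[
\frac{d}{d\epsilon}\bar{\bp}^{\epsilon}(\bs) = -\big(\partial_{\bp}\gb^\bs(\epsilon,\bar{\bp}^{\epsilon}(\bs))\big)^{-1}\big(\1 - \bar{\bp}^{\epsilon}(\bs)\big) = M^{-1}\big(\1 - \bar{\bp}^{\epsilon}(\bs)\big).
\]

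Next I would sign the two factors on the right-hand side. Since $M$ is a nonsingular M-matrix, Lemma~\ref{lem_M_Matrix}-(a) gives $M^{-1} \ge \0$, and nonsingularity forces no row of $M^{-1}$ to vanish. For the other factor, any solution $\bar{\bp}^{\epsilon}(\bs) \in [0,1]^N$ of \eqref{eq:eqbm_cond2} must in fact satisfy $\bar{\bp}^{\epsilon}(\bs) < \1$: if its $i$th entry equalled $1$, the $i$th component of \eqref{eq:eqbm_cond2} would read $-q_i(s_i)^{-1}\delta_i = 0$, contradicting $\delta_i > 0$. Hence $\1 - \bar{\bp}^{\epsilon}(\bs) > \0$, and therefore $\frac{d}{d\epsilon}\bar{\bp}^{\epsilon}(\bs) = M^{-1}(\1 - \bar{\bp}^{\epsilon}(\bs)) \ge \0$ (indeed $> \0$). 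Integrating over any subinterval of $(0,\infty)$ shows that $\bar{\bp}^{\epsilon}(\bs)$ is increasing in $\epsilon > 0$, which is the claim.

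I do not expect a genuine obstacle here: the substantive work has been done in Lemmas~\ref{lem_M_Matrix} and~\ref{lemma_M_M_matrix}, and what remains is the implicit-function-theorem computation together with the elementary bound $\bar{\bp}^{\epsilon}(\bs) < \1$ needed to sign the forcing term. The only minor care points are justifying $C^1$ regularity (immediate, since $\gb^\bs$ is $C^1$ and the relevant Jacobian is invertible) and this boundedness remark. An alternative, derivative-free route would be a comparison/coupling argument on the SIS dynamics \eqref{eq:pdot_whole}: for $\epsilon_1 < \epsilon_2$ the vector field is quasi-monotone in $\bp$ and pointwise larger when $\blambda^{\epsilon_1}$ is replaced by $\blambda^{\epsilon_2}$, so a comparison principle orders the trajectories and hence their stable equilibria; but the argument above is the most economical given the machinery already in place.
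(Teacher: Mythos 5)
Your proof is correct and follows essentially the same route as the paper: differentiate the implicit equation $\gb^\bs(\epsilon,\bar{\bp}^{\epsilon}(\bs))=\0$, identify $\partial_\epsilon\bar{\bp}^{\epsilon}(\bs)=\bigl[-\partial_{\bp}\gb^\bs\bigr]^{-1}(\1-\bar{\bp}^{\epsilon}(\bs))$, and invoke Lemma~\ref{lemma_M_M_matrix} together with the inverse-positivity in Lemma~\ref{lem_M_Matrix}-(a). Your explicit verification that $\bar{\bp}^{\epsilon}(\bs)<\1$ is a small point of added care that the paper leaves implicit, but it does not change the argument.
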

\begin{proof}
The implicit function theorem used in the proof of  Proposition~\ref{prop:Pess_continuity} also tells us that, for all $\epsilon > 0$, 
\beqan
\partial_{\epsilon} \bar{\pb}^\epsilon(\bs)
\myeq \left[ - \partial_{\bar{\pb}} \gb^{\bs}\big( \epsilon, 
    \bar{\pb}^\epsilon(\bs) \big) \right]^{-1} 
    \partial_\epsilon \gb^{\bs}\big( \epsilon,                  \bar{\pb}^\epsilon(\bs) \big) \lb
\myeq \left[ - \partial_{\bar{\pb}} \gb^{\bs}\big( \epsilon, 
    \bar{\pb}^\epsilon(\bs) \big) \right]^{-1} 
    \big( \1 - \bar{\pb}^\epsilon(\bs) \big) . 
\eeqan
As $- \partial_{\bar{\pb}} \gb^{\bs}\big( \epsilon, \bar{\pb}^\epsilon(\bs) \big)$ is a nonsingular M-matrix (Lemma~\ref{lemma_M_M_matrix}), its inverse-positivity (Lemma~\ref{lem_M_Matrix}-(a)) tells us $\partial_{\epsilon} \bar{\pb}^\epsilon(\bs) > \0$.
\end{proof}

\begin{cor}    \label{coro:1}
The optimal value $F^*_\epsilon$ is increasing in $\epsilon \geq 0$.
\end{cor}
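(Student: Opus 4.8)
The plan is to show that $F^*_\epsilon$ is increasing in $\epsilon \ge 0$ by combining the monotonicity of $\bar{\pb}^\epsilon(\bs)$ in $\epsilon$ (Proposition~\ref{prop_p_increasing_in_eps}) with the continuity of the stable equilibrium at $\epsilon = 0$ (Proposition~\ref{prop:continuity2}), and then passing the pointwise monotonicity of the objective through the minimization. First I would fix $0 \le \epsilon_1 < \epsilon_2$ and recall that, for every fixed $\bs \in \cS$, the map $\epsilon \mapsto \bar{\pb}^\epsilon(\bs)$ is nondecreasing (componentwise) on $(0,\infty)$ by Proposition~\ref{prop_p_increasing_in_eps}, and that $\lim_{\epsilon \downarrow 0} \bar{\pb}^\epsilon(\bs) = \bar{\pb}(\bs)$ by Proposition~\ref{prop:continuity2}; taking a limit along a decreasing sequence in the inequality $\bar{\pb}^{\epsilon'}(\bs) \ge \bar{\pb}^{\epsilon''}(\bs)$ for $0 < \epsilon' \le \epsilon''$ extends the monotonicity to include $\epsilon = 0$, i.e. $\bar{\pb}^{\epsilon_2}(\bs) \ge \bar{\pb}^{\epsilon_1}(\bs)$ for all $0 \le \epsilon_1 \le \epsilon_2$. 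Since $\cb = (c_i : i \in \cA)$ is a nonnegative vector (it represents per-unit-time losses), this gives $\cb\T \bar{\pb}^{\epsilon_2}(\bs) \ge \cb\T \bar{\pb}^{\epsilon_1}(\bs)$, and adding $w(\bs)$ to both sides yields $w(\bs) + \cb\T \bar{\pb}^{\epsilon_2}(\bs) \ge w(\bs) + \cb\T \bar{\pb}^{\epsilon_1}(\bs)$ for every $\bs \in \cS$.

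Next I would take the infimum over $\bs \in \cS$ on both sides. Because the inequality holds pointwise in $\bs$, the function $\bs \mapsto w(\bs) + \cb\T \bar{\pb}^{\epsilon_1}(\bs)$ is a pointwise lower bound for $\bs \mapsto w(\bs) + \cb\T \bar{\pb}^{\epsilon_2}(\bs)$, hence
\[
F^*_{\epsilon_1} = \inf_{\bs \in \cS}\big( w(\bs) + \cb\T \bar{\pb}^{\epsilon_1}(\bs) \big) \le \inf_{\bs \in \cS}\big( w(\bs) + \cb\T \bar{\pb}^{\epsilon_2}(\bs) \big) = F^*_{\epsilon_2},
\]
which is exactly the claim. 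Here I am using the elementary fact that if $g_1(\bs) \le g_2(\bs)$ for all $\bs$ then $\inf g_1 \le \inf g_2$; one should also note $F^*_0$ is well-defined, using Proposition~\ref{prop:unique_eqbm} for the existence of $\bar{\pb}(\bs)$ at $\epsilon = 0$.

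The only genuinely delicate point — and the one I would be careful to state explicitly rather than the routine infimum manipulation — is the extension of the monotonicity of $\bar{\pb}^\epsilon(\bs)$ down to $\epsilon = 0$. Proposition~\ref{prop_p_increasing_in_eps} only asserts $\partial_\epsilon \bar{\pb}^\epsilon(\bs) > \0$ for $\epsilon > 0$, so the comparison $\bar{\pb}^{\epsilon_2}(\bs) \ge \bar{\pb}^{\epsilon_1}(\bs)$ is immediate only for $0 < \epsilon_1 \le \epsilon_2$; the boundary case $\epsilon_1 = 0$ requires the continuity result of Proposition~\ref{prop:continuity2} to take the limit $\epsilon_1 \downarrow 0$ while preserving the (non-strict) inequality. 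If one also wanted genuine continuity of $\epsilon \mapsto F^*_\epsilon$ (as claimed in the surrounding text), that would need a short additional argument — e.g. continuity of $\bar{\pb}^\epsilon(\bs)$ jointly, plus a compactness or equicontinuity argument to interchange limit and infimum — but for the corollary as stated, monotonicity alone suffices and the proof is essentially the three steps above.
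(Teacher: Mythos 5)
Your proof is correct and follows exactly the route the paper intends: the corollary is stated as an immediate consequence of Proposition~\ref{prop_p_increasing_in_eps} (pointwise monotonicity of $\bar{\pb}^{\epsilon}(\bs)$ in $\epsilon$), with Proposition~\ref{prop:continuity2} handling the boundary case $\epsilon=0$, and the monotonicity then passes through the minimization over $\bs$. Your explicit attention to the $\epsilon_1=0$ endpoint and to the nonnegativity of $\cb$ is a careful filling-in of details the paper leaves implicit, since the paper gives no proof of the corollary at all.
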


Corollary~\ref{coro:1}, together with the continuity of the objective functions in \eqref{eq:constr1} and \eqref{eq:PertProb} and  Propositions~\ref{prop:Pess_continuity} and \ref{prop:continuity2}, tells us that as we reduce $\epsilon$, the optimal value $F^*_\epsilon$ of the perturbed problem will decrease to the optimal value $F^*$ of the original problem. Therefore, this suggests that if we solve the perturbed problem \eqref{eq:PertProb} with sufficiently small $\epsilon$, the optimal point we obtain will likely be a good solution for the original problem, and the optimal value of \eqref{eq:PertProb} will serve as an upper bound to $F^*$.

\section{Solving the Perturbed Problem}
    \label{sec:Proposed}

In this section, we discuss how we can solve the perturbed problem in \eqref{eq:PertProb} by extending the formulation and adapting the algorithms proposed in \cite{mai_ToN21}. First, we rewrite the perturbed problem as follows:
\begin{align}
\hspace{-0.06in}{\bf [PP]} \hspace{0.25in}
\min_{\bs, \bp} &\quad f(\bs, \pb) = w(\bs) + \cb\T\pb \lb
\mathsf{s.t.} &\quad (\bp^{-1} - \1) \circ \big(
\blambda^{\epsilon} +  B \bp \big) 
= \qb(\bs)^{-1}\circ\bdelta \label{eq:eqbm_cond_pinv}\\
&\quad \bs \in \mathcal{S}, \quad \bp > 0 \nnb
\end{align}
Note that constraint \eqref{eq:eqbm_cond_pinv} is the same as \eqref{eq:eqbm_cond2} when $\bp >0$. Next, we introduce the following change of variables:
$$
    d_i = (q_i(s_i))^{-1} \ \text{ or, equivalently, } \
s_i = q_i^{-1}(d_i^{-1}), 
$$
where $q_i^{-1}$ is the inverse map of $q_i$. Let $\mathcal{D} = \{ \bq(\bs)^{-1} \ | \ \bs \in \mathcal{S} \}$.  
As a result, we obtain the following equivalent problem:
\begin{align}
\hspace{-0.2in}{\bf [EP]} \hspace{0.4in} \min_{\bd, \bp} &\quad \tilde{f}(\bd, \pb) 
    = \tilde{w}(\bd) + \cb\T\pb \lb
\mathsf{s.t.} &\quad (\bp^{-1} - \1) \circ \big(
\blambda^{\epsilon} +  B \bp \big) 
= \bd\circ\bdelta \label{eq:eqbm_cond_pinv2}\\
&\quad \bd \in \mathcal{D}, \quad \bp > 0 \nnb
\end{align}
where $\tilde{w}(\bd):= w(\bq^{-1}(\bd^{-1}))$. 
This problem is nonconvex in general because of the cost function, the equality constraint in \eqref{eq:eqbm_cond_pinv2}, and possibly the constraint set $\mathcal{D}$. However, as shown in \cite{mai_ToN21}, a local minimizer can be found efficiently using a reduced gradient method (RGM). This also provides an upper bound on $F^*_\epsilon$. 


Suppose that $\mathcal{D}$ is convex and $\tilde{w}(\bd)$ is convex in $\bd \in \mathcal{D}$, which  implies the convexity of $\tilde{f}(\bd, \pb)$. It can be shown that, provided that $w$ is convex and increasing, e.g., $w(\bs) = \1^T \bs$, the second assumption holds for a family of breach probability functions $q_i(s_i) = (1 + \kappa_i s_i)^{-\beta_i}$ for some $\kappa_i>0$ and $\beta_i\in (0,1]$, in which case $s_i = (d_i^{1/\beta_i} - 1)\kappa_i^{-1}$ is convex in $d_i$. When such assumption does not hold, one might consider a suitable convex lower bound of $\tilde{w}(\bd)$ instead. In addition, when $\mathcal{S} = \{\bs \in \R_+^N \ | \ \1\T \bs \leq s_{{\rm budget}}\}$, where $s_{{\rm budget}}$ is the total budget, the constraint set $\mathcal{D}$ would be convex for the aforementioned family of breach probability functions. 

Following an approach analogous to \cite{mai_ToN21}, we can obtain a convex relaxation of the perturbed problem to deal with the nonconvex constraint in \eqref{eq:eqbm_cond_pinv}. Define the following variables: 
\begin{align} 
\hspace{-0.1in} \pb := e^{-\yb}, \ \tb := \blambda^{\epsilon}\circ e^{\yb}, \ 
U := \diag(e^{\yb})B\diag(e^{-\yb}).  \label{eqEXP_var}
\end{align}
Using these new variables, \eqref{eq:eqbm_cond_pinv2} can be rewritten as follows.
\begin{align}
\tb + U\1 
= \boldsymbol{\lambda}^{\epsilon} + B\pb + \bd \circ\boldsymbol{\delta}
\label{eqSteadyState6}
\end{align} 
Finally, we relax the equality constraints in \eqref{eqEXP_var} to the following convex inequality constraints.
\begin{align} 
\hspace{-0.1in}
\1 \ge \pb \ge e^{-\yb}, \  \tb \ge  \blambda^{\epsilon}\!\circ\! e^{\yb}, 
\ U \ge \diag(e^{\yb})B\diag(e^{-\yb}).
    \label{eqEXP_cone}
\end{align}
These yield the following convex relaxation of {\bf [EP]}:
\begin{align}
\hspace{-0.25in}{\bf [CR]} \hspace{0.4in} 
\min_{\bd, \bp, \by, \tb, U} &\quad \tilde{f}(\bd, \pb) = \tilde{w}(\bd) + \cb\T\pb \label{eq:CvxRelax} \\
\mathsf{s.t.} &\quad \eqref{eqSteadyState6}, \eqref{eqEXP_cone}, \ \bd \in \mathcal{D}, \ \by > \0 \nnb
\end{align}

\begin{thm} \label{thm:1}
Suppose $\bx^+_{\rm R} := (\bd^+, \pb^+, \yb^+, 
\tb^+, U^+)$ is an optimal point of {\bf [CR]}.
Then, we have
\beqa
\tilde{f}(\bd^+, \pb^+) 
\le F_{\epsilon}^* \le \tilde{f}(\bd', \pb'), 
    \label{eq:thm3}
\eeqa
where the pair $(\bd', \pb')$ is a feasible point of {\bf [EP]} given by
\[
\pb' = e^{-\yb^+} \ \mbox{ and } \
\bd' = \bd^+ + \diag(\boldsymbol{\bdelta}^{-1})
        B(\pb^+ - \pb').
\]  
In addition, {\bf [CR]} is exact, i.e., $(\bd', \pb')$ solves {\bf [EP]}, if 
\begin{align} 
B\T \diag(\boldsymbol{\delta}^{-1}) \nabla \tilde{w}(\bd) 
    \le  \cb  \ \mbox{ for all } \bd \in \mathcal{D}. 
    \label{eq:thm1-cond}
\end{align}
\end{thm}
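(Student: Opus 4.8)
The plan is to prove the two inequalities in \eqref{eq:thm3} separately and then collapse the resulting sandwich. For the left inequality I would show that $\textbf{[CR]}$ is a bona fide relaxation of $\textbf{[EP]}$: any feasible $(\bd,\pb)$ of $\textbf{[EP]}$ lifts to a feasible point of $\textbf{[CR]}$ with the same objective value. Since $\bd\in\mathcal{D}$ forces $\bd>\0$, constraint \eqref{eq:eqbm_cond_pinv2} gives $(\pb^{-1}-\1)\circ(\blambda^{\epsilon}+B\pb)=\bd\circ\bdelta>\0$, hence $\pb<\1$, so $\yb:=-\log\pb>\0$ is well defined; then $\tb:=\blambda^{\epsilon}\circ e^{\yb}$ and $U:=\diag(e^{\yb})B\diag(e^{-\yb})$ satisfy \eqref{eqEXP_cone} with equality, and substituting \eqref{eq:eqbm_cond_pinv2} turns \eqref{eqSteadyState6} into an identity. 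As the objective of $\textbf{[CR]}$ depends only on $(\bd,\pb)$ and $F_{\epsilon}^*$ is the common optimal value of \eqref{eq:PertProb}, $\textbf{[PP]}$ and $\textbf{[EP]}$, minimizing over the larger feasible set gives $\tilde{f}(\bd^+,\pb^+)\le F_{\epsilon}^*$.

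For the right inequality, put $\pb':=e^{-\yb^+}$; \eqref{eqEXP_cone} gives $\1\ge\pb^+\ge\pb'>\0$, so $B(\pb^+-\pb')\ge\0$ and $\bd':=\bd^+ + \diag(\bdelta^{-1})B(\pb^+-\pb')\ge\bd^+$. Eliminating $\bd^+\circ\bdelta$ through \eqref{eqSteadyState6} and then bounding $\tb^+\ge\blambda^{\epsilon}\circ(\pb')^{-1}$ and $U^+\1\ge\diag(e^{\yb^+})B\diag(e^{-\yb^+})\1=(\pb')^{-1}\circ(B\pb')$ from \eqref{eqEXP_cone} yields the componentwise bound
\[
\bd'\circ\bdelta\ \ge\ \big((\pb')^{-1}-\1\big)\circ(\blambda^{\epsilon}+B\pb'),
\]
i.e.\ $\pb'$ is a super-solution of the steady-state equation \eqref{eq:eqbm_cond2} at investment $\bd'$ (with $\bar{\pb}^{\epsilon}$ regarded as a function of $\bd$ via $d_i=q_i(s_i)^{-1}$). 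Because $B\ge\0$ and $\blambda^{\epsilon}\ge\epsilon\1$, the coordinatewise fixed-point map $\Phi(\pb)=\frac{\blambda^{\epsilon}+B\pb}{\blambda^{\epsilon}+B\pb+\bd'\circ\bdelta}$ attached to \eqref{eq:eqbm_cond2} is monotone and bounded away from $\0$ on $[\0,\pb']$; iterating it downward from $\pb'$ converges to a positive fixed point, which by uniqueness (Proposition~\ref{prop:unique_eqbm}, using $\epsilon>0$) equals $\bar{\pb}^{\epsilon}(\bd')\le\pb'$. Provided $\bd'\in\mathcal{D}$, the pair $(\bd',\bar{\pb}^{\epsilon}(\bd'))$ is then feasible for $\textbf{[EP]}$, and with $\cb\ge\0$, $F_{\epsilon}^*\le\tilde{w}(\bd')+\cb\T\bar{\pb}^{\epsilon}(\bd')\le\tilde{w}(\bd')+\cb\T\pb'=\tilde{f}(\bd',\pb')$; when the inequalities in \eqref{eqEXP_cone} are active at the $\textbf{[CR]}$ optimum the displayed relation holds with equality and $(\bd',\pb')$ itself is the asserted feasible point.

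For exactness under \eqref{eq:thm1-cond}, I would use convexity of $\tilde{w}$ in the chord form $\tilde{w}(\bd')-\tilde{w}(\bd^+)\le\nabla\tilde{w}(\bd')\T(\bd'-\bd^+)$. Substituting $\bd'-\bd^+=\diag(\bdelta^{-1})B(\pb^+-\pb')$ and using symmetry of $\diag(\bdelta^{-1})$ gives $\tilde{w}(\bd')-\tilde{w}(\bd^+)\le\big(B\T\diag(\bdelta^{-1})\nabla\tilde{w}(\bd')\big)\T(\pb^+-\pb')$, which by \eqref{eq:thm1-cond} at $\bd=\bd'$ and $\pb^+-\pb'\ge\0$ is at most $\cb\T(\pb^+-\pb')$. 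Hence $\tilde{f}(\bd',\pb')-\tilde{f}(\bd^+,\pb^+)=\big(\tilde{w}(\bd')-\tilde{w}(\bd^+)\big)+\cb\T(\pb'-\pb^+)\le0$; combined with the two bounds above, $\tilde{f}(\bd',\pb')=\tilde{f}(\bd^+,\pb^+)=F_{\epsilon}^*$, so $(\bd',\pb')$ attains $F_{\epsilon}^*$ and therefore solves $\textbf{[EP]}$, i.e.\ $\textbf{[CR]}$ is exact.

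The crux is the right inequality: turning the componentwise bound on $\bd'\circ\bdelta$ into control of the actual equilibrium $\bar{\pb}^{\epsilon}(\bd')$ relies on the monotonicity and uniqueness machinery behind Proposition~\ref{prop:unique_eqbm} (the DAG-of-MSCCs structure), and one still has to argue that $\bd'$ lands back in $\mathcal{D}$ and that the slack in \eqref{eqEXP_cone} is exactly absorbed by the correction $\diag(\bdelta^{-1})B(\pb^+-\pb')$ added to $\bd^+$. The left inequality and, given the bounds, the exactness step are comparatively routine --- the one subtlety there is linearizing $\tilde{w}$ at $\bd'$ rather than at $\bd^+$, so that \eqref{eq:thm1-cond} can be invoked.
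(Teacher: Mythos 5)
Your route to the upper bound is genuinely different from the paper's. The paper proves that $(\bd',\pb')$ satisfies the equality constraint \eqref{eq:eqbm_cond_pinv2} \emph{exactly}, by writing down the Lagrangian of {\bf [CR]} and using the KKT conditions to show that the multiplier $\bsigma$ of \eqref{eqSteadyState6} is strictly positive (via $\sigma_i(\lambda_i^\epsilon e^{y_i^+}+\sum_j b_{ij}e^{y_i^+-y_j^+}) \ge (c_i+\overline{\mu}_{p_i})e^{-y_i^+}>0$), whence complementary slackness forces the $\tb$- and $U$-inequalities in \eqref{eqEXP_cone} to be active at $\bx^+_{\rm R}$; feasibility of $(\bd',\pb')$ then drops out by direct substitution. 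You instead keep the slack, derive the one-sided bound $\bd'\circ\bdelta\ge((\pb')^{-1}-\1)\circ(\blambda^{\epsilon}+B\pb')$, and convert it into $\bar{\pb}^{\epsilon}(\bd')\le\pb'$ by a monotone super-solution iteration plus uniqueness of the positive equilibrium for $\epsilon>0$. That argument is sound (the map $\Phi$ is indeed monotone, $\Phi(\pb')\le\pb'$ follows from your bound, and positivity of $\blambda^\epsilon$ rules out equilibria with zero components), it yields $F^*_\epsilon\le\tilde w(\bd')+\cb\T\bar{\pb}^{\epsilon}(\bd')\le\tilde f(\bd',\pb')$ using $\cb\ge\0$, and it avoids duality entirely. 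Your lower-bound and exactness arguments coincide with the paper's, including the correct choice of linearizing $\tilde w$ at $\bd'$ rather than $\bd^+$.

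The gap is that the theorem asserts more than the inequality chain: it asserts that $(\bd',\pb')$ \emph{itself} is a feasible point of {\bf [EP]}, and your proof only obtains this conditionally ("when the inequalities in \eqref{eqEXP_cone} are active"), without ever establishing that they are active. That activeness is precisely what the paper's KKT computation delivers, and it is not a formality: if the $\tb$- or $U$-constraints had slack, $(\bd',\pb')$ would violate \eqref{eq:eqbm_cond_pinv2} and the stated feasible point would have to be replaced by $(\bd',\bar{\pb}^{\epsilon}(\bd'))$. Under condition \eqref{eq:thm1-cond} you can recover feasibility of $(\bd',\pb')$ a posteriori (the sandwich forces $\cb\T(\pb'-\bar{\pb}^{\epsilon}(\bd'))=0$, hence $\pb'=\bar{\pb}^{\epsilon}(\bd')$ when $\cb>\0$), but in the general case your argument proves \eqref{eq:thm3} while leaving the feasibility clause of the statement unproved. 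You also correctly flag, but do not close, the membership $\bd'\in\mathcal D$; the paper only verifies the lower bound $\bd'\ge\bd^+$ there as well, so this is a shared looseness rather than a defect specific to your write-up.
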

\begin{proof}
The first inequality in \eqref{eq:thm3} is obvious because {\bf [CV]} is a relaxation of {\bf [EP]}. The second inequality follows from the feasibility of  $(\bd', \pb')$ as shown below. 

$\bullet$ {\bf Proof of feasibility of $(\bd', \pb')$ for {\bf [EP]}: } 
Since $\bd^+ \ge \qb(\0)$ and $\pb^+ \ge e^{-\yb^+} = 
\pb' > \0$, it follows that $\bd' \ge \qb(\0)$. It remains 
to show that $(\bd', \pb')$ satisfies \eqref{eq:eqbm_cond_pinv2}:
\begin{equation}
(\pb')^{-1}\circ\boldsymbol{\lambda}^\epsilon 
    + (\pb')^{-1}\circ B\pb' 
= \boldsymbol{\lambda}^\epsilon + B\pb' 
    + \bd' \circ \bdelta.
    \label{eq:thm1-1}
\end{equation}
From the definition of $(\pb',\bd')$, 
\eqref{eq:thm1-1} is equivalent to 
\begin{align}
e^{\yb^+} \circ \boldsymbol{\lambda}^\epsilon 
    + e^{\yb^+}\circ Be^{-\yb^+} 
= \boldsymbol{\lambda}^\epsilon 
    + B\pb^+ + \bd^+ \circ \bdelta, 
    \label{eq:thm1-2}
\end{align}
where the right-hand side equals $\tb^+ + U^+\1$
from \eqref{eqSteadyState6}. 

To show that the equality in \eqref{eq:thm1-2} holds, we will prove that
\begin{equation} 
\tb^+ = \blambda^\epsilon \circ e^{\yb^+} \ \mbox{ and } \ 
U^+ = \diag(e^{\yb^+})B\diag(e^{-\yb^+}) 
    \label{eq:thm1-3}
\end{equation}
i.e., the inequality constraints of $\tb$ and 
$U$ are active at the solution $\bx^+_{\rm R}$ of 
{\bf [CR]}. 
Let $\bmu_p, \overline{\bmu}_p$, $\bmu_t$, 
$\bmu_y \in \RN_+$ and $\Phi \in \mathbb{R}^{N\times N}_+$ 
be the Lagrangian multipliers associated with inequality 
constraints, and $\boldsymbol{\sigma} \in \RN$ with 
the equality constraint in \eqref{eqSteadyState6} of 
{\bf [CR]}. Then, the Lagrangian function is 
given by 
\begin{align}
L = & \ \tilde{w}(\bd) + I_{\mathcal{D}}(\bd) + \langle\cb,\pb\rangle 
+ \langle \bmu_p, e^{-\yb}-\pb \rangle  \nnb\\
&+ \langle \bmu_t,\blambda^\epsilon \circ e^{\yb} 
    - \tb \rangle + \langle \Phi, \diag(e^{\yb})B\diag(e^{-\yb}) 
    - U\rangle \nnb\\
&+ \langle \overline{\bmu}_p, \pb-\1 \rangle 
    - \langle \bmu_y,\yb \rangle\nnb\\
&+ \langle \bsigma, \tb + U\1 
    - \boldsymbol{\lambda}^\epsilon - B\pb 
    - \bd \circ \bdelta \rangle, \nnb
\end{align}
where $I_{\mathcal{D}}(\cdot)$ is the set indicator function of the set $\mathcal{D}$, i.e., $I_{\mathcal{D}}(\bd) = 0$ if $\bd \in \mathcal{D}$ and $\infty$ otherwise. Note that we use the set indicator function for convenience; as will be clear below, we do not need to consider the optimal variables $\bd$.\footnote{If $\mathcal{D} = \{\bd~|~ \bh(\bd) \le \0\}$ for some convex function $h$, then we can also introduce an additional term $\langle\bmu_h, \bh(\bd)\rangle$ in the Lagrangian and the proof does not change.}
Since the problem is convex, the  
Karush-Kuhn-Tucker (KKT) conditions are 
necessary and sufficient for optimality. Thus, we have
\begin{subequations}
\begin{eqnarray}
\0 \myeq \nabla_p L 
    = \cb - \bmu_p + \overline{\bmu}_p - B\T\bsigma \label{L_p}\\
\0 \myeq \nabla_t L 
    = \bsigma - \bmu_t \label{L_t}\\
0 \myeq \partial_{u_{ij}} L 
    = \sigma_i - \phi_{ij} \ \mbox{ for all }
        i,j \in \mathcal{A} 
    \label{L_u}\\
0 \myeq \partial_{y_i} L 
    = \mu_{t_i} \lambda_i^\epsilon e^{y_i^+} 
    - \mu_{p_i}e^{-y_i^+} 
    + \textstyle\sum_{j \in \mathcal{A}}\phi_{ij}b_{ij} 
        e^{y_i^+ \!-y_j^+} \nnb\\
&& \hspace{-0.0in}
    - \textstyle\sum_{j \in \mathcal{A}} \phi_{ji}b_{ji} 
        e^{y^+_j\!-y^+_i} -\mu_{y_i} \ \mbox{ for all } 
        i \in \mathcal{A} 
        \label{L_y}\\
\0 \myeq \bmu_t \circ (\blambda^\epsilon \circ e^{\yb^+}-\tb^+) \label{slk_t}\\
0 \myeq \phi_{ij}(u^+_{ij} - b_{ij}e^{y^+_i-y^+_j})
    \ \mbox{ for all } i,j \in \mathcal{A}. 
    \label{slk_u}
\end{eqnarray}
\end{subequations}
From \eqref{L_t} and \eqref{L_u}, we get
\begin{equation}
\sigma_i = \mu_{t_i} = \phi_{ij} \ge 0 
    \ \mbox{ for all } i,j \in \mathcal{A}.
    \label{L_tu}
\end{equation}
Combining this and \eqref{L_y} yields 
\begin{eqnarray*}
&& \hspace{-0.2in}
\sigma_i (\lambda_i^\epsilon e^{y_i^+}          
    + \textstyle\sum_{j\in \mathcal{A}}b_{ij}e^{y_i^+-y_j^+})  \\
\myeq \mu_{p_i}e^{-y_i^+} \!+\! \mu_{y_i} 
    + \textstyle\sum_{j\in \mathcal{A}}\sigma_j b_{ji}e^{y_j^+ - y_i^+} \\
\mygeq (\mu_{p_i} + \textstyle\sum_{j\in \mathcal{A}}\sigma_j b_{ji}) 
    e^{-y_i^+}
\geq (c_i+\overline{\mu}_{p_i})e^{-y_i^+} 
> 0,
\end{eqnarray*}
where the first inequality follows from $\bmu_y\ge \0$ and $\yb^+\ge \0$, and the second from~\eqref{L_p}. Thus,  $\bsigma > \0$, which together with \eqref{L_tu} and the slackness conditions \eqref{slk_t} and \eqref{slk_u} implies that \eqref{eq:thm1-3} must hold. 

$\bullet$ {\bf Exactness of convex relaxation [CR]: } Since $\tilde{w}$ is assumed convex, we have $\tilde{w}({\bd}') - \tilde{w}(\bd^+) \le \nabla \tilde{w}({\bd}')\T ({\bd}' - \bd^+) = \nabla w({\bd}')\T \diag(\bdelta^{-1})B(\pb^+ - {\pb}')$, where the second equality follows from the definition of $\bd'$. From this inequality, the gap $\tilde{f}({\bd}', {\pb}') - \tilde{f}(\bd^+, \pb^+) = \tilde{w}(\bd') + \cb\T \pb' - (\tilde{w}(\bd^+) + \cb\T \pb^+) \leq (\nabla \tilde{w}({\bd}')\T \diag(\boldsymbol{\delta}^{-1})B - \cb\T)(\pb^+ - {\pb}')$. Under condition \eqref{eq:thm1-cond}, together with $\pb^+ \ge {\pb}'$, this gap is nonpositive. Thus, the inequality  in \eqref{eq:thm3} tells us that this gap must be zero, proving that {\bf [CR]} is exact.
\end{proof}

We end this section with the following remarks. First, note that condition~\eqref{eq:thm1-cond} can be checked prior to solving the relaxed problem [\textbf{CR}]. This can be done easily when $\nabla \tilde{w}$ (or an upper bound) is known and the constraint set $\mathcal{D}$ is simple enough. Second, our approach in this section is based on the reformulation \textbf{[PP]} of the original problem, where we use the condition that $\bp > \0$; in the relaxation [\textbf{CR}], this is ensured by imposing the condition $\bp \ge e^{\by}$. If $p_i = 0$ for some $i$ at an optimal point of the original problem (which can happen when $\mathcal{G}$ is only weakly connected and $\lambda_i = 0$), this condition would be violated and we cannot use \eqref{eq:eqbm_cond_pinv}; in the relaxation, this would correspond to having $y_i \to \infty$. As a result, our perturbation of the attack rates introduced in Section~IV proves to be meaningful in practice as it not only allows us to employ efficient methods to solve the problem approximately (as will be demonstrated numerically in the next section), but also  takes into account more stringent scenarios (with varying external attack rates).

\section{Numerical Examples}
    \label{sec:Numerical}
    
In this section, we provide some numerical results to evaluate the proposed method. Our numerical studies are carried out in MATLAB (version R2018b) on a laptop with 8GB RAM and a 2.4GHz Intel Core i5 processor.\footnote{Mention of commercial products does not imply NIST's endorsement.} We assume that the breach probability can be approximated (in the regime of interest) using a function of the form $q_i(s) = (1 + \kappa_i s)^{-1}$ for all $i \in {\cal A}$. In practice, the parameter $\kappa_i > 0$ models how quickly the breach probability decreases with security investment for system $i$. Here, for simplicity, we take $\kappa_i = \delta_i^{-1}$. In the first example, we use an artificial scale-free network, whereas the second example makes use of an Internet peer-to-peer network.

\textbf{Example 1:} 
We consider a weakly connected network consisting of two MSCCs, denoted by ${\cal G}_1 = ({\cal V}_1, {\cal E}_1)$ and ${\cal G}_2 = ({\cal V}_2, {\cal E}_2)$ with $|\mathcal{V}_1| = 50$ and $|\mathcal{V}_2|=150$. Here, each ${\cal G}_i$ is a bidirectional scale-free network generated with the power law parameter for node degrees set to 1.5, and the minimum and maximum node degrees equal to 2 and $\lceil 3\log |\mathcal{V}_i| \rceil$, respectively, in order to ensure network connectivity with high probability. We also add 10 directed edges chosen uniformly at random (u.a.r) from ${\cal G}_1$ to ${\cal G}_2$. 


We fix $\delta_i = 0.1$ for all $i \in \mathcal{A}$ and $\mathcal{S} = \mathbb{R}^N_{+}$. The infection rates $\beta_{j,i}$ are chosen u.a.r. between $[0.01, 1]$. We choose 
\[
w(\bs) = \1\T\bs \quad \mbox{ and } \quad 
\cb = (\nu\1 +0.2\cb_{\rm rand}) \circ B\T\1,
\]
where the elements of $\cb_{\rm rand}$ are chosen u.a.r in $(0,1)$ and $\nu \ge 0$ is a varying parameter. We select $\cb$ above to reflect an observation that systems that support more neighbors should, on the average, have larger economic costs modeled by $c_i$ (Section~\ref{sec:Formulation}-A). We select u.a.r 10 nodes in $\mathcal{G}_2$ to have positive primary attack rates $\lambda_i = 0.1$. Here, $\mathcal{G}_1$ is not exposed and we consider $\lambda_1^{\epsilon} = \epsilon$ in our perturbed problem.

\begin{figure}[h]
	\centering
	\includegraphics[scale=0.75]{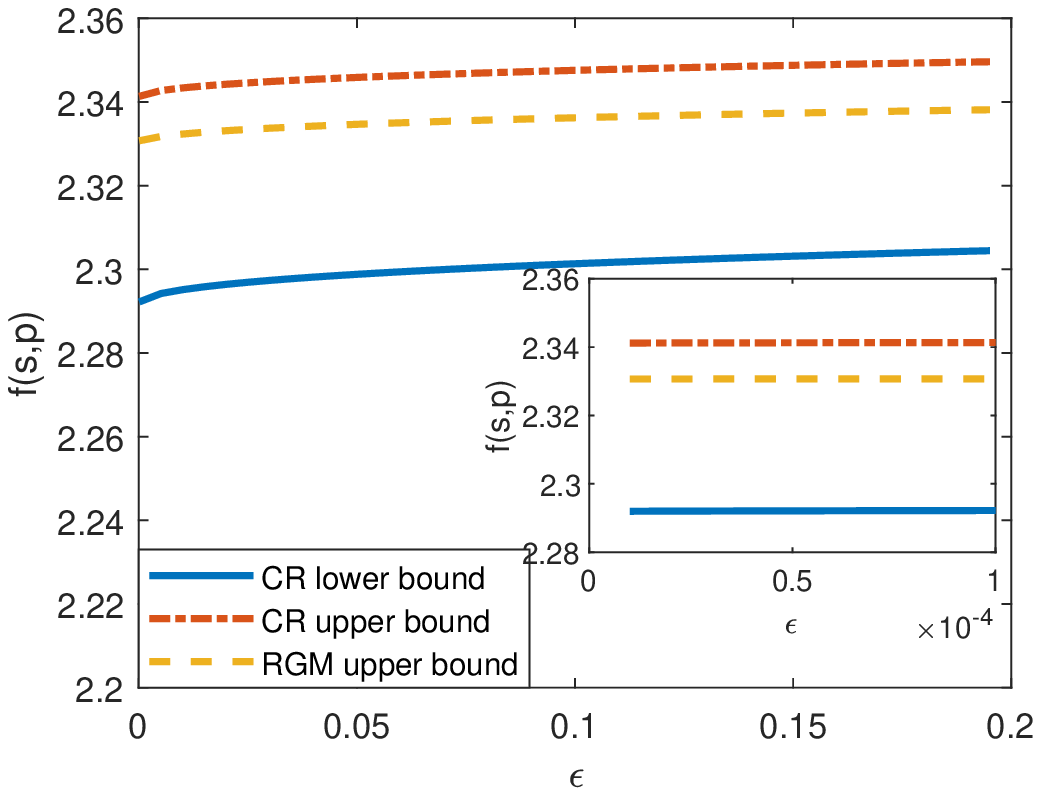}
	\includegraphics[scale=0.75]{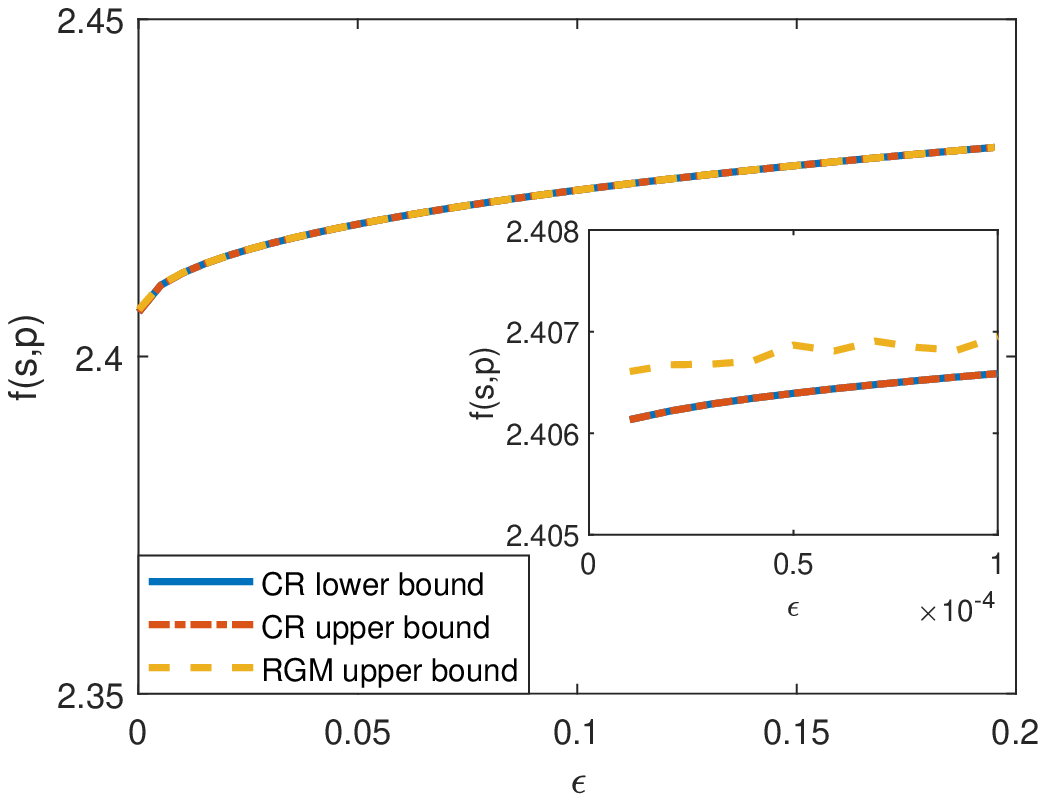}
	\vspace{-3mm}
	\caption{Bounds on optimal costs (normalized by $N$) when varying $\epsilon$ in two scenarios: $\nu=0.9$ in \textit{top plot} and $\nu=1.1$ in \textit{bottom plot}. Insets correspond to a small range of perturbations $\epsilon \in (10^{-5}, 10^{-4})$}
	\label{fig_eg1_eps_range}
	\vspace{-3mm}
\end{figure}

Fig.~\ref{fig_eg1_eps_range} shows the objective function values obtained using (a) the RGM (for finding a local minimizer starting from initial point $(\0, \bar{\pb}(\0))$),\footnote{We combine RGM with a backtracking line search algorithm using  initial step size $\gamma_0 = 1$ and a shrinking factor of 0.85; see \cite{mai_ToN21} for details.} (b) the optimal point $(\bd^+, \pb^+)$ of {\bf [CR]} and (c) the feasible point $(\bd', \pb')$ introduced in Theorem~\ref{thm:1}, as we vary $\epsilon$. Here we solve \textbf{[CR]} using MOSEK package \cite{mosek}.\footnote{Mention of commercial products does not imply NIST's endorsement.} The top plot corresponds to $\nu=0.9$ and shows that {\bf [CR]} is not exact because there is a gap between the objective function values achieved by $(\bd^+, \pb^+)$ and $(\bd', \pb')$. In fact, in this case the RGM provides a better upper bound on the optimal value than $(\bd', \pb')$. The bottom plot corresponds to $\nu=1.1$ and clearly indicates that {\bf [CR]} is exact as both upper and lower bounds overlap. Moreover, we can verify that the condition for {\bf [CR]} to be exact (in Theorem~\ref{thm:1}) holds for any $\nu\ge1$. The insets plot the achieved objective function values over a small range of perturbations $\epsilon \in (10^{-5}, 10^{-4})$. 

For this example, both MOSEK and RGM take less than 1 second to run for most values of $\epsilon$. However, the RGM can take up to 5 seconds when $\nu=1.1$ and the value of perturbation $\epsilon$ is very small (in the interval ($10^{-5}, 10^{-4}$)) because some $\bar{p}_i^{\epsilon}$'s are very close to 0, causing the Jacobian matrix of constraint function~\eqref{eq:g} to become almost singular; see also Lemma~\ref{lemma_M_M_matrix} and its proof. Fig.~\ref{fig_eg1_ps} plots the stable equilibrium $\bar{\pb}(\bs^*_\epsilon)$ and the optimal point $\bs^*_\epsilon$ for the case with $\nu=1.1$ and $\epsilon = 10^{-5}$. Here, the top plot in the figure showing $\bar{\pb}(\bs^*_\epsilon)$ suggests that the first MSCC comprising systems 1 through 50 will likely be fully protected, i.e., achieves disease free stable equilibrium, at the optimal point of the original problem in \eqref{eq:constr1}. In this case, the condition number of the  Jacobian matrix of $g$ in~\eqref{eq:g} is approximately $3.2\times 10^{4}$, explaining longer running times of the RGM as discussed above.
\begin{figure}[h]
	\centering
	\includegraphics[scale=0.75]{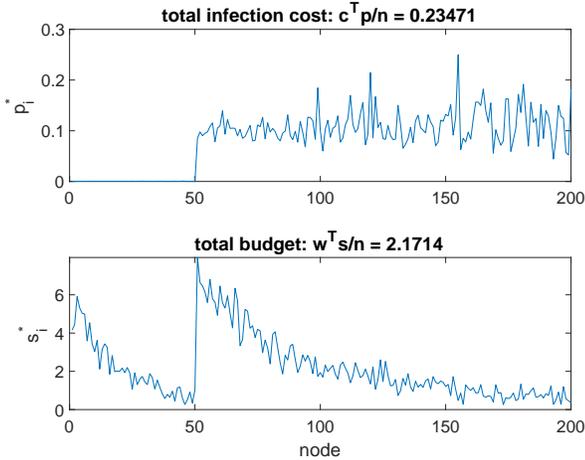}
	\vspace{-3mm}
	\caption{Optimal solution for the case $\nu=1.1$ and $\epsilon = 10^{-5}$.}
	\label{fig_eg1_ps}
\end{figure}

\vspace{3pt}
\textbf{Example 2:} We consider the Gnutella peer-to-peer network from August 9, 2002 with 8,114 nodes and 26,013 directed edges.\footnote{Data is available at  \textsf{https://snap.stanford.edu/data/index.html}.} We use similar settings as in the first example, except that we set $\lambda_i=0.01$ for $i = 1, \ldots, \frac{N}{2}$ and 0 otherwise. In the perturbed problem, we select $\lambda_i^{\epsilon} = \epsilon$ for $i = \frac{N}{2}+1,\ldots,N$. 

\begin{figure}[h]
	\centering
	\includegraphics[scale=0.75]{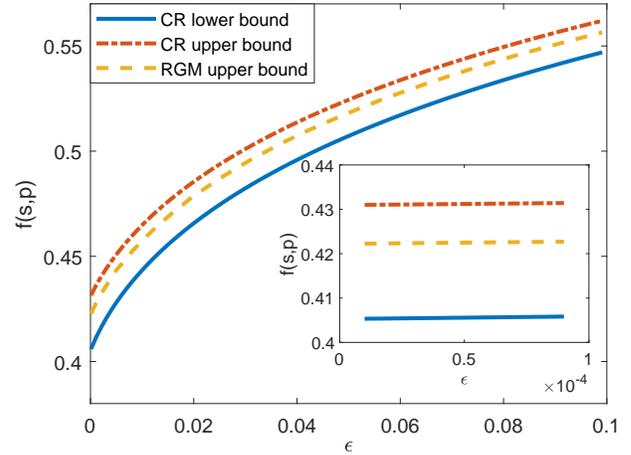}
	\vspace{-3mm}
	\caption{Bounds on optimal costs (normalized by $N$) when varying $\epsilon$ with $\nu=0.8$. Insets correspond to a small range of perturbations $\epsilon \in (10^{-5}, 10^{-4})$}
	\label{fig_eg2_eps_range}
\end{figure}  

Fig.~\ref{fig_eg2_eps_range} shows the lower and upper bounds obtained using the RGM and convex relaxation {\bf [CR]} on the optimal values of the perturbed problem when $\nu=0.8$ and $\cb_{\rm rand}=\0$. In this example, we first use MOSEK to solve \textbf{[CR]}, which takes less than $40$ seconds to run for each value of $\epsilon$. Then, we use the RGM with an initial solution $(\bd',\pb')$ obtained from \textbf{[CR]}, which takes less than $10$ seconds on average. In this example, {\bf [CR]} is not exact as there is a gap between $(\bd^+, \pb^+)$ and $(\bd', \pb')$. However, the gap between that of $(\bd^+, \pb^+)$ and the local minimizer obtained by the RGM is less than a little over 4 percent. Moreover, the inset suggests that all three values of the objective function converge as $\epsilon$ diminishes, validating our analytical results.

\section{Conclusion}\label{sec:Conclusion}

We study the problem of determining (nearly) optimal security investments in large systems, while taking into account temporal dynamics. In contrast to earlier studies, however, we do not assume that the underlying dependence graph among the comprising systems is strongly connected; instead, we allow the dependence graph to be weakly connected. This relaxation of the assumption introduces several technical challenges, in part due to the fact that there could be multiple equilibria. In order to deal with these challenges, we propose a new approach that perturbs the attack rates experienced by the systems, and establish the continuity of the {\em stable} equilibria, which can then be used to design an efficient algorithm.

\bibliographystyle{plain}
\bibliography{ref}

	\end{document}